%%%%%%%%%%%%%%%%%%%%%%% file template.tex %%%%%%%%%%%%%%%%%%%%%%%%%
%
% This is a general template file for the LaTeX package SVJour3
% for Springer journals.          Springer Heidelberg 2006/03/15
%
% Copy it to a new file with a new name and use it as the basis
% for your article. Delete % signs as needed.
%
% This template includes a few options for different layouts and
% content for various journals. Please consult a previous issue of
% your journal as needed.
%
%%%%%%%%%%%%%%%%%%%%%%%%%%%%%%%%%%%%%%%%%%%%%%%%%%%%%%%%%%%%%%%%%%%
%
% First comes an example EPS file -- just ignore it and
% proceed on the \documentclass line
% your LaTeX will extract the file if required
% [arxiv_v2: filecontents  stripped, 188 chars]
\documentclass{svjour3}                     % onecolumn (standard format)
\smartqed  % flush right qed marks, e.g. at end of proof
\usepackage{graphicx}
\usepackage{epsf}
\usepackage{psfrag}
\usepackage{booktabs}
\usepackage{enumerate}
\usepackage{ams}
\usepackage{amsxtra}
\usepackage{cite}

\usepackage{multirow}%\usepackage{setspace}

%
% \usepackage{mathptmx}      % use Times fonts if available on your TeX system
%
% insert here the call for the packages your document requires
%\usepackage{latexsym}
% etc.
%
% please place your own definitions here and don't use \def but
% \newcommand{}{}
%
% Insert the name of "your journal" with
% \journalname{myjournal}
%
\newtheorem{thm}{\textbf{Theorem}}%[section]
\newtheorem{cor}{\textbf{Corollary}}

\newtheorem{prop}{\textbf{Proposition}}

\newtheorem{defn}{\textbf{Definition}}%[section]
%

%[section]
%[section]

\begin{document}

\title{Prices of Anarchy, Information, and Cooperation in Differential Games\thanks{Research supported in part by grants from AFOSR and DOE.}
%\thanks{Grants or other notes
%about the article that should go on the front page should be
%placed here. General acknowledgments should be placed at the end of the article.}
}
%\subtitle{Do you have a subtitle?\\ If so, write it here}

%\titlerunning{Short form of title}        % if too long for running head

\author{Tamer~Ba\c{s}ar        \and
        Quanyan~Zhu %etc.
}

%\authorrunning{Short form of author list} % if too long for running head

\institute{Tamer~Ba\c{s}ar \at
              Coordinated Science Laboratory and the Department
of Electrical and Computer Engineering,  University of Illinois at Urbana-Champaign, 1308 West Main Street,
Urbana,
IL,  61801, USA. \\
              Tel.: +1 217-333-3607\\
              Fax: +1 217-265-0997\\
              \email{basar1@illinois.edu}           %  \\
%             \emph{Present address:} of F. Author  %  if needed
           \and
           Quanyan~Zhu \at
              Coordinated Science Laboratory and the Department
of Electrical and Computer Engineering,  University of Illinois at Urbana-Champaign, 1308 West Main Street,
Urbana,
IL,  61801, USA. \\  \email{zhu31@illinois.edu}   
}

\date{Received: date / Accepted: date}
% The correct dates will be entered by the editor

\maketitle

\begin{abstract}
The {\em price of anarchy} (PoA) has been widely  used in static games to quantify the loss of efficiency due to noncooperation.
Here, we extend this concept to  a general differential games framework. In addition, we introduce the {\em price of information} (PoI) to characterize comparative  game performances under different information structures, as well as the {\em price of cooperation} to capture the extent of benefit or loss a player accrues as a result of altruistic behavior.  We further characterize PoA and PoI for a class of scalar linear quadratic differential games under open-loop and closed-loop feedback information structures. We also  obtain some explicit bounds on these indices in a large population regime.
\keywords{Differential games \and Nash equilibria \and efficiency \and price of anarchy \and price of information \and price of cooperation \and linear-quadratic games \and information structures}
% \PACS{PACS code1 \and PACS code2 \and more}
% \subclass{MSC code1 \and MSC code2 \and more}
\end{abstract}

\section{Introduction}
\label{intro}
It is well known that the non-cooperative Nash equilibrium in  nonzero-sum games is generally inefficient \cite{DUB86}, which means that it would be possible for all players to do better in terms of attaining higher utilities or lower costs (than they would attain under Nash equilibria, even if the equilibrium is unique) through a cooperative behavior. This is true for static deterministic games, and naturally also for stochastic games as well as dynamic and differential games. In these latter of classes of games, one could bring up additional issues with regard to Nash equilibria beyond efficiency or lack thereof, such as whether an increase in information to one player (or all or a subset of the players) would be advantageous to that player (or groups of players), in terms of attaining higher utilities or lower costs, or whether acquiring more information would be undesirable for a player. In the special class of games where all players have the same utility function or cost function (that is, team problems) and what is sought is the global maximum or global minimum of these functions, the answer to such a query is clean, which is that additional information (defined as expansion of sigma fields) can never hurt. The same is true for the special class of zero-sum games. In stochastic games, or dynamic and differential games which are not team problems or zero-sum games, however, the answer is not that clean, and one could encounter quite surprising and at the outset counter-intuitive results. Perhaps the first demonstration of this was reported in \cite{Basar72} and \cite{BasHo74}, where two classes of two-player stochastic static games were considered, one a linear-quadratic-Gaussian (LQG) model and the other one a stochastic Cournot duopoly model, both of which admit unique Nash equilibria. It was shown that for the LQG model better information (on some stochastic variables) for {\em only one} player leads to lower average Nash equilibrium costs for {\em both} players, but in the duopoly model only the player whose information is improved benefits while the other one hurts (in the sense that his average Nash equilibrium cost increases). Another way of comparison would be in terms of the relative values of the average Nash equilibrium costs attained by the players, when one player has informational advantage over the other. It was again shown in \cite{Basar72}  that, in an otherwise completely symmetric game, the player who has better information attains higher cost than the other player in the LQG model (the counter-intuitive result), whereas he attains lower cost in the duopoly model (the intuitive result). Several manifestations of these conclusions can be seen also in dynamic and differential games; for example time-consistent open-loop Nash equilibrium is not necessarily inferior to the strongly time-consistent closed-loop feedback Nash equilibrium \cite{BasOls99}. 

Now coming back to  inefficiency of Nash equilibrium in a fixed nonzero-sum game, one question of interest is exploration of the extent of this inefficiency, that is how far off is a Nash equilibrium from the socially optimal solution, which is obtained as the maximum of the sum of the utilities of the players, or some convex combination of the utilities (or minimum in the case of cost functions). The notion of the {\em price of anarchy} (PoA)was introduced in  \cite{ROU04} as a quantification of this offset, as a utility ratio  between the worst possible Nash solution (among multiple Nash equilibria) and the social optimum. In a way, this index serves to quantify the loss of efficiency due to competition. It has been shown that in routing games and resource allocation games  (see, \cite{ROU04} and \cite{JMT05}),  PoA is bounded by a constant, allowing agents to achieve some level of efficiency despite being suboptimal. 

The  idea of quantifying the gap between  social optimality and  game equilibrium solutions  sparked many follow-up work in that same vein. In \cite{SS08}, {\em price of simplicity} has been introduced for a pricing game in communication networks as the ratio between the revenue collected from a flat pricing rule and the maximum possible revenue. In \cite{GJC09}, {\em price of uncertainty} has been introduced  to measure the relative payoff of an expert user of a security game under complete information  to the one under incomplete information. In \cite{ZHU08b}, {\em price of leadership} has been proposed as a measure of comparison of utilities in a power control game between Nash equilibria and Stackelberg solutions. In all of these works, primarily communication networks have been used as a backdrop application domain, be it routing, resource allocation, power control, or security. Game-theoretical methods along with Nash equilibrium have found many applications  in communication networks, with some selected recent references being \cite{AB98, Basar07, MahBas03, SAAB02, ZHU08e, JOH03, JMT05,  ABEJW06}

In this paper, we discuss several indices which quantify variations or offsets in the payoff values or costs attained under Nash equilibria in the context of differential games (DGs). We first extend the notion of {\it PoA} to DGs, which heretofore   has been primarily limited to static continuous kernel games.  We provide a characterization of {\it PoA} for a class of scalar linear-quadratic (LQ) DGs, and quantify the efficiency loss in the long run when the players behave non-cooperatively under the Nash equilibrium concept. We consider both open-loop (OL) and closed-loop (CL) information structures (ISs). We show that for the class of scalar LQ DGs with CL IS using the strongly time-consistent CL feedback Nash equilibrium, the {\it PoA} has some appealing computable upper bounds, which can  further be approximated when the number of players is sufficiently large (that is, the large population regime), whereas, under the OL IS, it is possible to obtain an expression for the {\it PoA} in closed form. 
 
As mentioned earlier, going from static to dynamic (differential) games brings in the possibility of various ISs, which add richness to the (Nash equilibrium) solution of a game. Different ISs  (generally) yield different equilibrium solutions, and hence IS is a crucial factor in the investigation of {\it PoA} in DGs. Motivated by this, we introduce another index, the {\em price of information} (PoI), which is a result of the comparison of the equilibrium utilities or costs under different ISs.  
For the class of scalar LQ DGs above, we show that the {\it PoI} between the feedback and open-loop ISs is shown to be bounded from below by ${\sqrt{2}}/{2}$ and from above by $\sqrt{2}$, again  in the large population regime.
Finally, motivated by some recent results reported in \cite{AAE10} on the level of cooperation between players in a routing game, captured by the degree of willingness of a player to place partial weight on other players' utilities in his utility function, we introduce the {\em price of cooperation} (PoC) as a measure of benefit or loss to a player on his base Nash equilibrium payoff due to cooperation.

The structure of the paper is  as follows. In Section~\ref{sec:2}, we introduce a general $N$-player DG framework with different ISs, and define in this context the indices, {\em PoA}, {\em PoI}, and {\em PoC}. In Section~\ref{sec:3}, we investigate the {\em PoA} for a class of scalar LQ feedback DGs. In Section~\ref{sec:4}, we study the LQ DGs under open-loop IS, and in Section~\ref{sec:5}, we establish bounds on the {\em PoI}. We conclude and identify future directions in Section~\ref{sec:6}.
An earlier version of some of the results in this paper can be found in the recent conference paper \cite{ZhuBas10}.

\section{General Problem Formulation}
\label{sec:2}
In this section we first introduce the general  nonzero-sum differential games framework along with the Nash equilibrium solution, and then introduce the three indices: prices of anarchy, information, and cooperation.

Let $\mathcal{N}=\{1, 2, \cdots, N\}$ be the set of players, and $[0, T\rangle$\footnote{The notation ``$\rangle$"  is introduced to capture two cases: finite horizon when $T$ is finite (in which case we have $[0, T]$), and infinite horizon when $T$ is infinite (in which case we have $[0, \infty)$).} be the time interval of interest. At each time instant $t\in [0, T\rangle$, each player, say Player~$i$,  chooses an $m_i$-dimensional control value (action) $u_i(t)$ from his set of feasible control values $U_i\subset \mathbb{R}^{m_i}$, where we also make the standard assumption that as a function of $t$ the control function $u_i(\cdot)$ is piecewise continuous on $[0, T\rangle$.  The state variable $x$ is of dimension $n$, and takes values in $ \mathbb{R}^n$; as a function of time, $t$,  we assume $x(\cdot)$ to be piecewise continuously differentiable on $[0, T\rangle$, and evolving according to the differential equation:
$$\dot{x}(t)=f(x(t),u_1(t), \cdots, u_N(t), t)\,, \;\;x(0)=x_0\,,$$
where $x_0\in\mathbb{R}$ is the initial value of the state and the system dynamics $f(\cdot): \Omega\rightarrow \mathbb{R}^n$ is defined on the set 
$$\Omega=\{(x, u_1, \cdots, u_N, t)| x\in \mathbb{R}^n, t\in[0, T\rangle, u_i\in U_i, i\in\mathcal{N}\}\,,$$ as a jointly piecewise continuous function which is also Lipschitz in $x$, and also possibly Lipschitz in the $u_i$'s, depending on whether the underlying information structure (IS) is open loop of closed loop feedback.

We will consider two different ISs: {\em Open loop} (OL),, where the controls are just functions of time, $t$ (and also of initial state $x_0$, which however is assumed to be fixed and a known parameter of the game), and {\em closed-loop state-feedback}, where the controls are allowed to be functions of current value of the state and of time, that is, for Player $i$, $u_i(t) = \gamma_i(t; x(t))$. In the latter case, $\gamma_i : [0, T\rangle \times \mathbb{R}^n \to U_i$ is known as the policy variable (strategy) of Player~$i$, which is a mapping from the set of information available to the player to his control (action) set.\footnote{One can introduce more general ISs, such as those that involve memory, but here we will restrict the discussion to only OL and CL state-feedback (SF) structures so as not to encounter {\em informational non-uniqueness} of Nash equilibria \cite{BasOls99}.} We  require  each $\gamma_i(t; \cdot)$ to be Lipschitz in $x$, in addition to being jointly piecewise continuous in its arguments, and denote the class of all such mappings by $\Gamma_i$. We further require that $f$ be Lipschitz not only in $x$ but also in $\{u_1,\ldots , u_N\}$, so that the differential equation generating the state,
$$\dot{x}(t)=f(x(t),\gamma_1(t; x(t)), \cdots, \gamma_N(t; x(t)), t)\,, \;\;x(0)=x_0\,,$$
admits a unique piecewise continuously differentiable solution for each $\gamma_i \in\Gamma_i,\; i\in \mathcal{N}$. 
Clearly, when a particular $\gamma_i$ does not depend on $x$ (such as the OL IS), then it would be captured as a special case, and hence to capture this also notationally, we  will write $\gamma_i\in\Gamma_i$ as $\gamma_i^\eta\in\Gamma_i^\eta$, where $\eta$ stands for the underlying IS (which for the discussion in this paper is either OL or CL SF).\footnote{Even though in general different players can have different ISs, we will consider here only the case when the IS in the entire DG is either OL  or CL SF. Otherwise, derivation of Nash equilibrium becomes complicated, and one has to introduce {\em small noise robustness} in order to eliminate informational non-uniqueness, even in LQ DGs \cite{Basar89}, \cite{BasOls99}. At the conceptual level, however, the analysis in this paper, and the indices introduced, equally apply to the mixed IS case.}

Each player $ i\in\mathcal{N}$ is a cost-minimizer, with the objective function for Player~$i$, as defined on the state and action spaces, is given by 
$$L_i(u)=\int_0^T F_i(x(t),u_1(t), \cdots,u_N(t), t)dt + S_i(x(T))$$
when $T<\infty$, and
$$L_i(u)=\int_0^\infty F_i(x(t), u_1(t), \cdots, u_N(t), t)dt$$
when $T=\infty$, where $u:= \{u_1,\ldots , u_N\}$. In the expressions above,
for each $i\in\mathcal{N}$, the function $F_i:\Omega \rightarrow\mathbb{R}$ is Player~$i$'s instantaneous (running) cost function, and in the first expression $S_i: \mathbb{R}^n\rightarrow \mathbb{R}$ is the terminal value function. 
Substituting  $u_i(t) = \gamma_i(t; x(t))$ in the above, we arrive at the {\.em normal} or {\em strategic} form of the DG, where now the dependence in $L_i$  is on $\gamma_i$'s instead of $u_i$'s. Let us denote this new cost function representation by $J_i$, for Player~$i$, which we write more explicitly (showing its argument) as  $J_i(\gamma^\eta)$, where $\gamma^\eta := \{\gamma_1^\eta,\ldots,\gamma_N^\eta\} \in \Gamma^\eta:= \Gamma^\eta_1 \times\cdots\times\Gamma^\eta_N$, where again this covers also the OL IS as a special case; we will occasionally drop the superscript $\eta$ when the IS is clear from context.

%Depending on the information structure $\eta$, the control $u_i$ can take different forms from the feasible control set $U_i^\eta$. Common information structures are (i) open-loop ($\eta=$OL): $u_i(t)=\nu(t, x_0)\in U_i^{OL}:=U_i(t, x_0)$, for all $i\in\mathcal{N}$, where the control of each player has only access to the initial state information. (ii) feed-back ($\eta=$FB): $u_i(t)= \nu(t, x(t)):= U_i^{FB}:=U_i(t,x(t))$, for all $i\in\mathcal{N}$, where each individual control is dependent on the state information. Other information structures such as closed-loop perfect state (CLPS), Sample-data perfect state (SDPS), delayed perfect state (DPS) and closed-loop imperfect state (CLIS) can be found in \ref{BB04}.  In a differential game,  each player faces an optimal control problem when the opponents' strategies are fixed.

Let $\gamma_{-i}^\eta$ denote the collection of policies of all players except  Player~$i$, i.e.,
$\gamma_{-i}^\eta=(\gamma_1^\eta, \ldots, \gamma_{i-1}^\eta, \gamma_{i+1}^\eta, \ldots, \gamma_N^\eta)\,,$ in a game with IS $\eta$. If $\gamma_{-i}^{\eta}$ is fixed as ${\gamma_{-i}^{\eta*}}$, Player~$i$ is faced with the dynamic optimization (optimal control) problem: \footnote{We use ``$\textrm{OC}(i)$" to denote Player~$i$'s individual optimal control problem.}
\begin{align}
 (\textrm{OC}(i))\;\;& \min_{\gamma_i\in {{\Gamma_{i}^{\eta}}}}J_{i}(\gamma_i, \gamma_{-i}^{\eta*})
:=\int_0^T F_i(x,\gamma_i(\eta), {\gamma_{-i}^{\eta*}}(\eta), t)dt+ S_i(x(T))\\
\nonumber& \textrm{s.t.~} \;\;  \dot{x}(t)=f(x, \gamma_i(\eta), {{\gamma^{\eta*}_{-i}}}(\eta), t)\,,\;\; x(0)=x_0\,.
%\\ \nonumber &u_i(t)\in U_i^\eta.
\end{align}

In the case of infinite horizon, the problem remains the same with $S_i\equiv 0$ and $T=\infty$. If we denote the solution to $\textrm{OC}(i)$ by ${\gamma_i^\eta}^*$, and carry out the optimization for each $i$, then what we have is a Nash equilibrium compatible with the IS that defines the DG. This is made precise below.
\begin{defn}\label{etaNE}
[$\eta$-Nash equilibrium] For a DG with IS $\eta$, the policy $N$-tuple $\{{\gamma_i^{\eta}}^*,\;  i\in\mathcal{N}\} =: {\gamma^{\eta}}^*$ is  an $\eta-$Nash equilibrium if, for each $i\in \mathcal{N}$, $\gamma_i^{\eta*}$ solves  the optimal control problem (OC$(i)$). Let $\Gamma^{\eta*}$ be the set  of all $\eta-$Nash equilibria, as a subset of $\Gamma^{\eta}$.
\end{defn}

Now, for the CL IS case, one has to further refine the Nash equilibrium, in order to eliminate informational non-uniqueness. Consider a family of DGs, structured the same way, but defined over the time interval $[s, T\rangle$, where $s > 0$ is the parameter that identifies different elements of the family. We say that an $\eta$-Nash equilibrium, when $\eta$ is the CL IS is {\em strongly time consistent} if its restriction to $[s, T\rangle$ is also an $\eta$-Nash equilibrium, and this being true for each $s$ and all $x(s)$. Such Nash equilibria could also be called sub-game perfect equilibria, by direct analogy with a similar concept in finite games. We will henceforth consider only strongly time consistent Nash equilibria when $\eta$ is CL, but will suppress that refinement in the development below.

Let $J^{\eta*}_i, i\in\mathcal{N}$, denote the  achieved values of the objective functions of the players under a particular $\eta-$Nash equilibrium $\gamma^{\eta*}$, and a corresponding  total cost achieved (as a convex combination of the individual costs) be given by $J_\mu^{\eta*}=\sum_{i\in\mathcal{N}}\mu_iJ_i^{\eta*}$, where $\mu_i$ is a positive weighting factor on Player~$i$'th cost, satisfying the normalization condition $\sum_{i\in\mathcal{N}}\mu_i=1$. We assume, without any loss of generality, that $J^{\eta*}_i > 0$ for all $i\in\mathcal{N}$, and hence {\em a fortiori} $J_\mu^{\eta*} > 0$.

Now as a benchmark, let us consider the case of full coordination, where the players agree on minimizing a single objective function, which is a convex combination of the individual cost functions. We may call this also a socially optimal solution.
 The corresponding underlying optimization problem is the optimal control problem: \footnote{The acronym ``COC" stands for ``Centralized Optimal Control".}
\begin{align}
\nonumber (\textrm{COC})\;\;\; & \min_{\gamma\in\Gamma} \sum_{i=1}^N\mu_i\left\{\int_0^T F_i(x(t),\gamma(\eta), t)dt + S_i(x(T))\right\}\\
\nonumber \textrm{s.t.~} & \dot{x}(t)=f(x, \gamma(\eta), t)\,,\;\; x(0)=x_0\,,
%\\\nonumber &u_i(t)\in U_i^\eta, \forall i\in\mathcal{N}.
\end{align}
where the optimization could also be carried out with respect to control values, $u$, that is in an open-loop fashion, since the problem is deterministic and also is not strategic.  Hence, the optimal value of this optimal control problem is independent of the IS, which we denote by
$J_\mu^{\circ}$, and the corresponding (open-loop)  optimal control  by $u^{\circ}=[u_1^{\circ}, \ldots, u_N^{\circ}]$. Note that we necessarily have $0 < J_\mu^{\circ} \leq J_\mu^{\eta*}\,,\;$ where $J_\mu^{\eta*}$ is under any Nash equilibrium solution out of $\Gamma^{\eta*}$.

\begin{defn}
[Price of Anarchy] Consider an $N$-person DG as above and its associated optimal control problem (COC) with $J_\mu^{\circ} > 0$. The {\em price of anarchy} for the DG is\footnote{If the maximum below does not exist, then it is replaced by supremum in the definition of PoA.} 
\begin{equation}\label{GeneralPoA}
\rho_{N,\mu,T}^\eta=\max_{\gamma^{\eta*}\in \Gamma^{\eta*}} \, J_\mu^{\eta*} / {J_\mu^{\circ}}
\end{equation}
as the worst-case ratio of the total game cost to the optimum social cost.
\end{defn} 
In addition to its dependence on the cost functions,  PoA depends on the number of players in the game, the IS, the weights on individual players and the time horizon. 
%If the players are instead maximizers,  the PoA is defined as 
%\begin{equation}\label{GPoA}
%\rho_{N,\mu,T}^\eta=\min_{\gamma^{\eta*}\in \Gamma^{\eta*}}\, J_\mu^{\eta*} / J_\mu^{\circ}\,.
%\end{equation}
Note that the PoA as defined in (\ref{GeneralPoA}) is lower-bounded by 1. 
%and in (\ref{GPoA}) it is upper-bounded by 1. 

\begin{defn}[Price of Information (PoI)] \label{GeneralPoI}
 Let $\eta_1$ and $\eta_2$ be two ISs. Consider two $N$-person DGs which differ only in terms of  their ISs,  with game $1$ having IS $\eta_1$, and game $2$ having $\eta_2$. Let the values of a particular $\mu$ convex combination of the objective functions be  ${J^{\eta_1}_\mu}^*$ and ${J^{\eta_2}_\mu}^*$, respectively, achieved under the Nash equilibria ${\gamma^{\eta_1}}^*$ and ${\gamma^{\eta_2}}^*$. The {\em price of information} between the two ISs (under cost minimization) is given by
\begin{equation}
\chi_{\eta_1}^{\eta_2}(\mu)=\max_{\gamma^{\eta_2^*}\in \Gamma^{\eta^*_2}}J_\mu^{\eta^*_2} \, /\max_{\gamma^{\eta_1^*}\in \Gamma^{\eta_1^*}}J_\mu^{\eta_1^*}.
\end{equation}
%Under utility maximization, it can likewise be defined as
%\begin{equation}
%\chi_{\eta_1}^{\eta_2}(\mu)={\min_{\gamma^{\eta_2^*}\in \Gamma^{\eta_2^*}}J_\mu^{\eta^*_2}}\,  /{\min_{\gamma^{\eta_1^*}\in \Gamma^{\eta_1^*}}J_\mu^{\eta_1^*}}.
%\end{equation}
\end{defn}
The PoI compares the worst-case costs  under two different ISs for the same convex combination, and  quantifies the relative loss or gain when the DG is played under a different IS. Clearly, when $\chi_{\eta_1}^{\eta_2}(\mu)<1$, the IS $\eta_2$ is superior to its counterpart $\eta_1$ .
%in cost minimization and becomes inferior when  $\chi_{\eta_1}^{\eta_2}(\mu)>1$. 
The connection between PoI and  PoA can be captured by 
%\begin{equation}
$\;\chi_{\eta_1}^{\eta_2}(\mu)={\rho_{N,\mu, T}^{\eta_2}}\, / {\rho_{N,\mu, T}^{\eta_1}}\,.$
%\end{equation}

Before introducing the third index (price of cooperation), let us define another class of DGs, which is an intermediate case between full cooperation and full non-cooperation. Consider the case
where Player~$i$, even though his cost function is $J_i$, adopts an altruistic mode and minimizes instead a cost function that places some weight on other players' costs. Let $\lambda_i:= \{\lambda_i^j , j\in \mathcal{N}\}$ be a set of nonnegative parameters adding up to $1$,
$\sum_{j\in\mathcal{N}}\lambda^j_i=1$. Let $\tilde J_i(\gamma^\eta; \lambda_i)\,,\; i\in \mathcal{N}$ be defined by
$$ \tilde J_i(\gamma^\eta; \lambda_i):= \sum_{j\in\mathcal{N}} \lambda_i^j J_j(\gamma^\eta)\,,\;\; i\in \mathcal{N}$$
Consider the $\eta$ IS DG with cost functions $\tilde J$'s, and let $\tilde\Gamma^\eta$ be the set of all its $\eta$-Nash equilibria. For $\tilde\gamma^\eta\in \tilde\Gamma^\eta$, Player~$i$ achieves an actual cost of $J_i(\tilde\gamma^\eta)$, which may be better (lower) or worse (higher) than $J^{\eta*}_i$ defined earlier. Note that if $\lambda_i^j = \mu_i$ for all $i, j\in \mathcal{N}$, then all players have the same cost function, and every $\eta$-Nash equilibrium solution of the altruistic game is a solution to COC, assuming that person by person optimal solutions of COC are globally optimal. Hence, in this limiting case we have full cooperation. This now brings us to the third index, which is keyed to individual players.

\begin{defn}[Price of Cooperation (PoC)] \label{GeneralPoC}
Consider an N-player DG with a fixed IS $\eta$, and with a fixed set of  cooperation vectors $\lambda:= \{\lambda_i,\; i\in \mathcal{N}\}$.
Let $\tilde{J}_i,\; i\in \mathcal{N}$, and ${\tilde\Gamma}^\eta$ be as defined above, and $\Gamma^\eta$ be the set of all Nash equilibria of the original game. Then, the {\em price of cooperation} for Player~$i$ under the cooperation scheme $\lambda$ is given by
\begin{equation}\label{PoC}
\nu_i^\eta(\lambda) = \max_{\gamma\in{\tilde\Gamma}^\eta} J_i (\gamma) /  \max_{\gamma\in\Gamma^\eta} J_i (\gamma)\,.
\end{equation}
\end{defn}

As indicated earlier, if $\lambda_i = \mu$ for all $i$, where $\mu = \{\mu_i,\; i\in \mathcal{N}\}$ as in {\em PoA}, then every NE of  $\{\tilde{J}_i,\; i\in \mathcal{N}\}$ is a person-by-person optimal solution of the {\em COC} with cost function $J_\mu$, which would also be globally optimal under some appropriate convexity conditions. If $\gamma^0$ is one such solution, minimizing $J_\mu$, then the {\em PoC} is given by 
$$\nu_i^\eta(\mu) =  J_i (\gamma) /   \max_{\gamma\in\Gamma^\eta} J_i (\gamma)\,,$$
which can be viewed as the reciprocal of {\em individualized PoA}, where the latter is a measure of the loss or gain an individual player incurs on his individual cost when he (along with other players) plays the worst NE strategy as opposed to 
the globally minimizing strategy (again along with other players).

\section{Scalar LQ Feedback Differential Games}
\label{sec:3}
The analysis of the price of anarchy is complex for general DGs as there often exist more than one Nash equilibrium, which show strong dependence on the underlying IS. 
%However, it is an important metric that quantifies the performance of games under different control structures and thus becomes important in engineering designs. 
For specific game structures, however, its analysis may be tractable provided that we avoid informational non-uniqueness. One such class is scalar linear quadratic DGs with state feedback IS, which is what we focus on in this section. These games also enjoy wide applications in economics and communication networks; see, \cite{DJLS06}, \cite{AB98}.  We first state our model and recall some important relevant results on LQ feedback DGs; for details, see \cite{BasOls99}, \cite{Eng05}.

\subsection{Game Model}
\label{subsec:3.1} 
As a special case of the class of DGs considered in the previous section, consider the infinite-horizon scalar $N-$person LQ DGs, with quadratic cost function
%\footnote{As common in the LQ differential games literature, $q_i$ is oftern assumed to be nonnegative. Because we will take ratios between optimal costs that involves $q_i$, we let $q_i$ be strictly positive to avoid division-by-zero.}
\begin{equation}\label{cost}
L_i(u) = \int_0^\infty \left(q_i x^2(t) +r_iu_i^2(t)\right) dt,\; \;\;i\in\mathcal{N},
\end{equation}
%where $u \in\mathbb{R}$, $r_i\in \mathbb{R}_{++}$, $q_i\in\mathbb{R}_{++}$ and $x_i(t)\in\mathbb{R}$ evolves according to and linear system dynamics 
\begin{equation}\label{system}
\dot{x}(t)=ax(t)+\sum_{i=1}^Nb_iu_i(t),\; \;\;x(0)=x_0\,,
\end{equation}
where $q_i>0$, $r_i > 0$, $x_0 \not= 0$, $b_i \not =0$ are all scalar quantities. Let  $b:=[b_1, \dots, b_N]$. We are interested in   strongly time-consistent state-feedback (SF) Nash equilibrium (NE), where  further  the NE policies are required to be stationary (that is time invariant). We will refer to such equilibria in short as {\em Feedback NE}. The following theorem provides their characterization.
%We restrict ourselves to stationary feedback controls such that $\gamma_i(x)=h_ix, i\in\mathcal{N}$ and $h=[h_1, \cdots, h_N]$ belongs to the set $\mathcal{H}=\{h\in\mathbb{R}^N| a+h^Tb<0\}$, i.e., the state feedback control can stabilize the system (\ref{system}). Our simplification to linear stationary policies can be justified by the fact that linear stationary controls are indeed optimal over all feedback policies in LQ differential games, \cite{BAR87}.

%The feedback control $u^*$ is a stationary linear feedback Nash equilibrium if it satisfies Definition (\ref{etaNE}) or in this context
%\begin{equation}
%J_i(x_0, u^*)\leq J_i(x_0, u_i, u_{-i}^*), \forall x_0 \in X, i\in \mathcal{N}, 
%\end{equation}
%i.e., $$J_i(x_0, u^*)=\inf_{u_i=h_ix, h\in\mathcal{H}} J_i(x_0, u_i, u_{-i}^*), \forall x_0 \in X, i\in \mathcal{N}.$$
%We drop the subscripts $T$ and $N$ in this section because it is clear from the context which equilibrium we are dealing with.

%The following theorem completely characterizes the feedback Nash equilibrium solution.
\begin{thm}\label{FBNE}
[Feedback NE, \cite{BasOls99}, \cite{Eng05}]  Let $\{k_i, \; i\in\mathcal{N}\}$ solve the set of coupled algebraic Riccati equations
\begin{equation}\label{RE}
2\left(a-\sum_{j=1}^Ns_jk_j\right)k_i+q_i+s_ik_i^2=0, \; i\in\mathcal{N}
\end{equation}
satisfying the stability condition
$a-\sum_{i=1}^Ns_ik_i<0\,,$
%\begin{equation}\label{SC}
%a-\sum_{i=1}^Ns_ik_i<0,
%\end{equation}
where $s_i:=b_i^2/r_i$. Then, the $N$-tuple of policies $\gamma_i^*(x)=-\frac{b_i}{r_i}k_ix, \; i\in\mathcal{N},$ constitutes a feedback NE, with the corresponding cost for 
Player~$i$ being $J^*_i=k_ix^2_0$. Furthermore, the positively weighed total cost is $J_\mu^*=\bar{k}x^2_0$, where $\bar{k}=\sum_{i=1}^N\mu_ik_i$.\newline
If the set of coupled  algebraic Riccati equations do not admit a solution which is also stabilizing, then the DG does not have a feedback NE. \hfill$\diamond$
\end{thm}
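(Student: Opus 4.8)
The plan is to establish the two directions separately: the existence (``if'') part by dynamic programming, and the non-existence (``only if'') part by a converse argument. For the existence part, I would fix every player other than Player~$i$ at the candidate equilibrium policy $\gamma_j^*(x)=-\frac{b_j}{r_j}k_jx$, $j\neq i$. Substituting these into the dynamics (\ref{system}) reduces Player~$i$'s problem (OC$(i)$) to a scalar infinite-horizon linear-quadratic regulator with dynamics $\dot{x}=a_ix+b_iu_i$, where $a_i:=a-\sum_{j\neq i}s_jk_j$, and cost $\int_0^\infty(q_ix^2+r_iu_i^2)\,dt$.

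First I would posit the quadratic value function $V_i(x)=k_ix^2$ and form the associated Hamilton--Jacobi--Bellman (HJB) equation for this regulator. The pointwise minimization over $u_i$ of the Hamiltonian (a convex quadratic, since $r_i>0$) yields the unique minimizer $u_i=-\frac{b_i}{r_i}k_ix$, which is exactly $\gamma_i^*$; substituting this back into the HJB equation and cancelling the common factor $x^2$ collapses it precisely to the $i$-th coupled Riccati equation (\ref{RE}). Thus the hypothesized solution $\{k_i\}$ is exactly what certifies that each $V_i$ solves Player~$i$'s HJB equation with $\gamma_i^*$ as the verified minimizer.

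The step I expect to be the main obstacle is the verification that $V_i(x)=k_ix^2$ is the \emph{true} value function, not merely a solution of the HJB equation, since an infinite-horizon criterion requires controlling behavior as $t\to\infty$. Here the stability condition $a-\sum_j s_jk_j<0$ is essential: under the full profile $\gamma^*$ the closed-loop dynamics become $\dot{x}=(a-\sum_j s_jk_j)x$, so $x(t)\to 0$ exponentially and the transversality requirement $\lim_{t\to\infty}V_i(x(t))=0$ holds. A standard completion-of-squares / verification-theorem argument then shows that every admissible (stabilizing) alternative for Player~$i$ incurs cost at least $V_i(x_0)$, which proves $\gamma_i^*$ is a best response and hence that $\gamma^*$ is a feedback NE. The cost formulas then follow immediately by reading off $J_i^*=V_i(x_0)=k_ix_0^2$ and forming $J_\mu^*=\sum_i\mu_ik_ix_0^2=\bar{k}x_0^2$.

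For the converse I would argue the contrapositive. Suppose a stationary feedback NE exists; fixing the other players' stationary strategies, Player~$i$ again faces a scalar regulator whose only finite-cost optimal response is a stabilizing linear feedback $-\frac{b_i}{r_i}k_ix$, with $k_i$ the stabilizing solution of the corresponding single-player Riccati equation. Imposing this self-consistency simultaneously over all $i\in\mathcal{N}$ forces $\{k_i\}$ to solve the coupled system (\ref{RE}) together with $a-\sum_j s_jk_j<0$, so no feedback NE can exist unless a stabilizing solution of (\ref{RE}) does. The delicate point here is justifying the reduction to \emph{linear} responses: a priori the other players could use nonlinear stationary feedback, which would render Player~$i$'s dynamics nonlinear, so one leans on the quadratic LQ structure (and the finiteness-of-cost requirement, which rules out non-stabilizing responses) to pin the optimal response down to the Riccati form, as in the cited references.
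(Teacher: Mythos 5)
A preliminary remark: the paper itself offers no proof of this theorem --- it is recalled from \cite{BasOls99} and \cite{Eng05}, with the $\diamond$ closing the statement --- so your attempt can only be measured against the standard arguments in those references. Your existence half is exactly that standard argument, and it is correct. Freezing the opponents at $\gamma_j^*$ reduces (OC$(i)$) to a scalar LQR with drift $a_i=a-\sum_{j\neq i}s_jk_j$; the ansatz $V_i(x)=k_ix^2$ collapses the HJB equation to the $i$-th equation of (\ref{RE}); and completion of squares gives, along any admissible trajectory,
\[
\int_0^T\left(q_ix^2+r_iu_i^2\right)dt\;=\;k_ix_0^2-k_ix^2(T)+\int_0^T r_i\left(u_i+\frac{b_ik_i}{r_i}x\right)^2dt\,.
\]
Two details are worth making explicit: (a) $k_i>0$ follows from (\ref{RE}) together with the stability condition, since the last two terms of (\ref{RE}) are positive, forcing the first to be negative; (b) for any deviation of finite cost, $q_i>0$ and $r_i>0$ force $x\in\mathcal{L}_2$ and $u_i\in\mathcal{L}_2$, hence $\dot x\in\mathcal{L}_2$ and therefore $x(T)\to 0$, so the term $-k_ix^2(T)$ vanishes in the limit against \emph{every} finite-cost deviation (feedback or open-loop), not merely ``stabilizing'' ones, while infinite-cost deviations trivially cannot improve. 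With that, $J_i(\gamma_i,\gamma_{-i}^*)\geq k_ix_0^2$ with equality at $\gamma_i^*$, and the cost expressions $J_i^*=k_ix_0^2$ and $J_\mu^*=\bar kx_0^2$ follow.

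The converse half contains a genuine gap, which your own parenthetical caveat acknowledges but does not repair. The step ``fixing the other players' stationary strategies, Player~$i$ again faces a scalar regulator'' is valid only if those strategies are \emph{linear}; at a hypothetical feedback NE in the paper's class (stationary, strongly time-consistent state feedback) the opponents' policies are a priori arbitrary Lipschitz functions of $x$, so Player~$i$'s best-response problem is not LQ and its value function need not be quadratic. Appealing to ``the quadratic LQ structure'' is circular here, because that structure is precisely what a nonlinear opponent destroys; and the concern is not vacuous, since scalar LQ differential games are known to admit nonlinear stationary feedback (Markov) equilibria on restricted state domains (Tsutsui--Mino-type constructions). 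A correct proof of the nonexistence sentence must therefore either (a) restrict the equilibrium class to linear stationary feedback --- which is how \cite{Eng05} states and proves the result: a \emph{linear} feedback NE exists if and only if (\ref{RE}) admits a stabilizing solution --- or (b) exploit the requirement that strong time consistency hold for \emph{all} initial states $x(s)$, which is what excludes the locally-defined nonlinear equilibria; either way, an additional argument forcing quadratic value functions is needed. As written, your sketch establishes only that no linear feedback NE exists when (\ref{RE}) has no stabilizing solution --- which is the reading intended by the cited sources, but strictly weaker than the sentence you set out to prove.
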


The main challenge in computing the feedback NE solution for this DG is that equation (\ref{RE}) is a nonlinear coupled system of equations. The fact that we have a scalar problem alleviates the difficulty somewhat, since  it is possible to turn it into a linear problem through a change of variables, as outlined in \cite{Eng00a},\cite{Eng00b}.
Let $\sigma_i=s_iq_i$, $\sigma_{\max}=\max_i\sigma_i$, $p_i=s_ik_i , i=1, \ldots,N$, and 
\begin{equation}\label{lambda}
\lambda=\sum_{i=1}^Np_i-a.
\end{equation} 
Multiplying (\ref{RE}) by $s_i$, we rewrite it as
\begin{equation}\label{ModRE}
p_i^2-2\lambda p_i+\sigma_i=0, \; i=1, \ldots, N.
\end{equation}
Let $\Omega \subset \mathcal{N}$ be an index set, $\Omega_{-i}=\Omega\backslash \{i\}$, and $n_\Omega=|\Omega|$. For every $\Omega \neq \emptyset$, we have (after some manipulations)
%\begin{flushleft}
%$\prod_{j\in\Omega}p_j\lambda=\prod_{j\in\Omega}p_j\left(\sum_{i=1}^Np_i-a\right)$
%\end{flushleft}
%\begin{align}
%\nonumber =& \prod_{j\in\Omega}p_j\sum_{i\in\Omega}p_i+\prod_{j\in\Omega}p_j\sum_{i\notin\Omega}p_i-a\prod_{j\in\Omega}p_j\\
%\nonumber =& \sum_{i\in\Omega}\prod_{j\in\Omega}p_jp_i+\sum_{i\notin\Omega}\prod_{j\in\Omega}p_jp_i-a\prod_{j\in\Omega}p_j\\
%\nonumber =& \sum_{i\in\Omega}p_i^2\prod_{j\in\Omega_{-i}}p_j+\sum_{i\notin\Omega}\prod_{j\in\Omega}p_jp_i-a\prod_{j\in\Omega}p_j\\
%\nonumber =& \sum_{i\in\Omega}(2\lambda p_i-\sigma_i)^2\prod_{j\in\Omega_{-i}}p_j+\sum_{i\notin\Omega}\prod_{j\in\Omega}p_jp_i-a\prod_{j\in\Omega}p_j\\
%\nonumber =& 2\lambda \sum_{i\in\Omega}\prod_{j\in\Omega}p_j-\sum_{i\in\Omega}\sigma_i\prod_{j\in\Omega_{-i}}p_j+\sum_{i\notin\Omega}\prod_{j\in\Omega}p_jp_i-a\prod_{j\in\Omega}p_j\\
%\nonumber =& 2n_{\Omega}\prod_{j\in\Omega}p_j\lambda-\sum_{i\in\Omega}\sigma_i\prod_{j\in\Omega_{-i}}p_j+\sum_{i\notin\Omega}\prod_{j\in\Omega}p_jp_i\-a\prod_{j\in\Omega}p_j,
%\end{align}
%where $\Omega_{-i}=\Omega\backslash \{i\}$ and $n_\Omega=|\Omega|$.
%Thus, we have 
{\small
\begin{equation}\label{eqnOmega1}
\prod_{j\in\Omega}p_j\lambda  =\frac{1}{2n_{\Omega}-1}\left\{\sum_{i\in\Omega}\sigma_i\prod_{j\in\Omega_{-i}}p_j-\sum_{i\notin\Omega}\prod_{j\in\Omega}p_jp_i+a\prod_{j\in\Omega}p_j\right\}.
\end{equation}
}
When $\Omega=\emptyset$, we define
\begin{equation}\label{eqnOmega2}
\prod_{j\in\Omega}p_j\lambda:=\lambda=\sum_{j=1}^Np_j-a.
\end{equation}
Hence, for every $\Omega$, we have an equation in the form of either (\ref{eqnOmega1}) or (\ref{eqnOmega2}). Let 
$\mathbf{p}=[1, p_1, p_2,$ $\ldots, p_N, p_1p_2, \ldots, p_1p_N, p_2p_3, \ldots, p_{N-1}p_N, \ldots, \prod_{i=1}^Np_i]^T$. We can write (\ref{eqnOmega1}) and (\ref{eqnOmega2}) into 
\begin{equation}\label{tildeMp}
\widetilde{\mathbf{M}}\mathbf{p}=\lambda \mathbf{p}.
\end{equation} 
Let $\mathbf{p}:=[1, k_1, k_2, \ldots, k_N, k_1k_2,\ldots, k_1k_N, k_2k_3, \ldots, $ $k_{N-1}k_N,  \ldots, \prod_{i=1}^Nk_i]^T$ and $\mathbf{D}=\textrm{diag}\{1$, $ s_1$, $s_2$, $\ldots$, $s_N$, $s_1s_2,$  $\ldots$, $s_1s_N$,$s_2s_3,$ \ldots, $s_{N-1}s_N, \ldots, \prod_{i=1}^Ns_i\}\,.$ Hence, we can rewrite $\mathbf{p}=\mathbf{D}\mathbf{k}$ and (\ref{tildeMp})  into
\begin{equation}\label{Mp}
\mathbf{M}\mathbf{k}=\lambda \mathbf{k}, \, \mbox{ where }\; \mathbf{M} :=\mathbf{D}^{-1}\widetilde{\mathbf{M}}\mathbf{D}\,.
\end{equation} 
%where \begin{equation}\label{M}
%\mathbf{M}=\mathbf{D}^{-1}\widetilde{\mathbf{M}}\mathbf{D}.
%\end{equation}
Equation (\ref{Mp}) is an eigenvalue problem with each index set $\Omega$ corresponding to a row enumerated starting from the empty set. It has maximum $2^N$ distinct eigenvalues and $2^N$ eigenvectors. The vector formed by the second entry to the $N+1$-st entry of the eigenvectors yields the solution to (\ref{RE}) when the first entry of the vector is normalized to $1$ and they satisfy the stability condition of Theorem~\ref{FBNE}.  This leads to:

%This observation leads to the following theorem.

\begin{thm}\label{FBNEComp}
[Feedback NE Computation, \cite{Eng05}] Suppose $\mathbf{M}$ is a nondefective matrix with distinct eigenvalues. Let $(\lambda, \mathbf{k})$ be an eigenvalue-eigenvector pair such that $\lambda\in\mathbb{R}_+$ and $\lambda>\sigma_{\max}$. Then, a feedback NE $\gamma_i^*(x)=-\frac{b_i}{r_i}k_i\,x, \; i\in\mathcal{N}\,,$  is yielded by
$k^*=\mathbf{1}^T\mathbf{k}$ provided that the resulting solution is stabilizing, where $\mathbf{1}=[0, 1, \ldots, 1, 0, \ldots, 0]^T$ is a vector whose $2$nd to $N+1$-st entries are 1's.
%, and $h^*=[h_i^*]=[-\frac{b_i}{r_i}k_i^*]$.
\end{thm}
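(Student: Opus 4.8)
The plan is to prove Theorem~\ref{FBNEComp} by establishing a two-way correspondence between the stabilizing solutions of the coupled Riccati system (\ref{RE}) and the real eigenpairs of $\mathbf{M}$, and then recovering the equilibrium gains directly from the eigenvector. First I would record the forward direction, which is essentially already contained in the derivation (\ref{ModRE})--(\ref{Mp}): if $(k_1,\dots,k_N)$ solves (\ref{RE}) and is stabilizing, then setting $p_i=s_ik_i$ and taking $\lambda$ as in (\ref{lambda}) turns the monomial vector whose $\Omega$-entry is $\prod_{j\in\Omega}k_j$ into an eigenvector of $\mathbf{M}$ with eigenvalue $\lambda$. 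Moreover the induced closed-loop coefficient is $a-\sum_i s_ik_i=-\lambda$, so the stability condition of Theorem~\ref{FBNE} is \emph{exactly} $\lambda>0$; this is precisely the hypothesis $\lambda\in\mathbb{R}_+$, and the first ($\Omega=\emptyset$) row of (\ref{Mp}) automatically encodes the self-consistency $\sum_i p_i=\lambda+a$.

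The substance, and the main obstacle, is the reverse direction: a generic eigenvector of $\mathbf{M}$ need not possess the multiplicative (monomial) structure $v_\Omega=\prod_{j\in\Omega}v_{\{j\}}$, and without it one cannot read a consistent solution off the entries. Indeed, writing out the $i$-th singleton row of $\mathbf{M}\mathbf{v}=\lambda\mathbf{v}$ couples $v_{\{i\}}$ to the pair-entries $v_{\{i,l\}}$, and only when these factor as $v_{\{i\}}v_{\{l\}}$ does that row collapse to the scalar quadratic (\ref{ModRE}). I would dispose of this using the nondefectiveness and distinct-eigenvalue hypotheses together with a dimension count. Viewing (\ref{ModRE}) together with (\ref{lambda}) as a polynomial system in $(p_1,\dots,p_N)$ (with $\lambda$ eliminated), it consists of $N$ quadrics and hence has, generically, $2^N$ complex solutions; by the forward construction each produces a monomial eigenvector, and these match the $2^N$ eigenvalues of $\mathbf{M}$. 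Distinct eigenvalues then make every eigenspace one-dimensional, so the monomial eigenvectors form a complete eigenbasis and \emph{every} eigenvector of $\mathbf{M}$ must be monomial. In particular the given $(\lambda,\mathbf{k})$ factors; normalizing the first entry to $1$ forces the $2$nd through $(N+1)$-st entries to equal $p_i=s_ik_i$, so the gains $k_i$ are read off (and aggregated by $k^*=\mathbf{1}^T\mathbf{k}$), while the hypothesis $\lambda>\sigma_{\max}$ (with $\lambda>0$) singles out the real, positive, stabilizing branch of the quadratics in accordance with \cite{Eng05}.

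Finally I would close the loop. The recovered $k_1,\dots,k_N$ solve (\ref{RE}) by construction, and the identity $a-\sum_i s_ik_i=-\lambda<0$ shows they are stabilizing precisely because $\lambda\in\mathbb{R}_+$, so the proviso in the statement is met; Theorem~\ref{FBNE} then yields at once that $\gamma_i^*(x)=-\frac{b_i}{r_i}k_ix$ is a feedback NE with the asserted form. The delicate point throughout is the monomial-structure argument of the second paragraph: it is what legitimizes treating the combinatorially large eigenproblem (\ref{Mp}) as a genuine linearization of the nonlinear system (\ref{RE}), and it is exactly where the distinct-eigenvalue (nondefective) assumption is indispensable.
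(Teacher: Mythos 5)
A point of calibration first: the paper offers no proof of Theorem~\ref{FBNEComp} at all---the result is imported from \cite{Eng05}, and the only supporting material is the derivation (\ref{ModRE})--(\ref{Mp}) preceding the statement, which coincides with your forward direction. That direction of your proposal is correct: a stabilizing solution of (\ref{RE}) gives, via $p_i=s_ik_i$ and (\ref{lambda}), a vector of square-free monomials satisfying (\ref{eqnOmega1})--(\ref{eqnOmega2}), hence an eigenvector of $\widetilde{\mathbf{M}}$ (equivalently, after conjugation by $\mathbf{D}$, one of $\mathbf{M}$), and since $a-\sum_i s_ik_i=-\lambda$, stabilization is exactly $\lambda>0$; you also correctly observe that this makes the ``provided stabilizing'' proviso automatic once $\lambda\in\mathbb{R}_+$. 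You are likewise right that the entire substance of the theorem is the converse---that the given eigenvector must have the multiplicative structure $v_\Omega=\prod_{j\in\Omega}v_{\{j\}}$---which the paper delegates to Engwerda without comment.

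Your argument for that converse, however, has a genuine gap. The implication you need is: if $\mathbf{M}$ has $2^N$ distinct eigenvalues, then (\ref{ModRE}) together with (\ref{lambda}) has $2^N$ \emph{distinct} complex roots, so that the monomial eigenvectors exhaust all eigendirections. What you offer instead is that $N$ quadrics have ``generically'' $2^N$ solutions. Genericity of the data $(a,\sigma_1,\ldots,\sigma_N)$ is neither assumed in the theorem nor implied by the spectral hypothesis on $\mathbf{M}$, so this proves nothing about a given instance. B\'ezout yields $2^N$ roots only counted with multiplicity, and only after one verifies there are no roots at infinity (true here, since the top-degree forms $p_i(p_i-2\sum_j p_j)$ vanish simultaneously only at the origin, but you never check this). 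If some root were multiple, there would be strictly fewer than $2^N$ monomial eigenvectors, and nothing in your argument excludes that the given pair $(\lambda,\mathbf{k})$, even with $\lambda\in\mathbb{R}_+$ and $\lambda>\sigma_{\max}$, is one of the leftover non-monomial eigenvectors, in which case reading the $k_i$ off its entries fails. The missing idea is the link between eigenvalue multiplicities of $\mathbf{M}$ and root multiplicities of the polynomial system: $\widetilde{\mathbf{M}}$ is (the transpose of) the matrix of multiplication by the linear form $\sum_j p_j-a$ on the $2^N$-dimensional quotient algebra $\mathbb{C}[p_1,\ldots,p_N]/I$, expressed in the square-free monomial basis, so its characteristic polynomial is $\prod_{\xi}\left(t-\lambda(\xi)\right)^{m(\xi)}$, the product running over the roots $\xi$ with multiplicities $m(\xi)$. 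Only with this identification does ``distinct eigenvalues'' force $2^N$ simple roots, and hence the monomial structure of every eigenvector; this is in essence the route of \cite{Eng00b} and \cite{Eng05}. A secondary lapse: your proof never actually uses $\lambda>\sigma_{\max}$ (reality of the $p_i$ and stabilization already follow from $\mathbf{k}$ real and $\lambda>0$), so you should either exhibit the role of that hypothesis or state explicitly that your argument does not need it.
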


\begin{thm}\label{Uniq}
[Uniqueness of Feedback NE] Let $\bar{p}:=\sum_{j\in\mathcal{N}}p_j, p_{-i}:=\sum_{j\in\mathcal{N}, j\neq i} p_j$. There exists a unique feedback NE for the LQ DG described by (\ref{cost}) and (\ref{system}) under either one of the following two conditions:\newline
(i)  $N$ is sufficiently large such that $p_{-i}>a, \forall i$, or (ii)
 $a=0$.\newline
Furthermore, the solutions to the coupled algebraic Riccati equations that characterize the feedback NE are of the following forms under the corresponding conditions above:
\begin{enumerate}[(s-i)]
\item $p_i=(\bar{p}-a)-\sqrt{(\bar{p}-a)^2-\sigma_i}~;$
\item $p_i=\bar{p}-\sqrt{\bar{p}^2-\sigma_i},\;$ where
\end{enumerate}
%where
\begin{equation}\label{FPeqnP}
\bar{p}-a=\frac{1}{N-1}\left(\sum_{i=1}^N\sqrt{(\bar{p}-a)^2-\sigma_i}+a\right).
\end{equation}
Moreover, the stability condition $a-\sum_{i=1}^Ns_ik_i<0\,$ is satisfied, and hence the FB NE is stabilizing.
\end{thm}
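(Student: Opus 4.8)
The plan is to work entirely with the decoupled quadratic characterization (\ref{ModRE}). Since $\lambda=\bar p-a$ by (\ref{lambda}), each $p_i$ must be one of the two roots $p_i=\lambda\pm\sqrt{\lambda^2-\sigma_i}$, which are real and distinct precisely when $\lambda^2>\sigma_{\max}$, and the stability condition $a-\sum_i s_ik_i<0$ of Theorem~\ref{FBNE} is exactly $\lambda=\bar p-a>0$. Thus every stabilizing feedback NE is encoded by a choice of sign for each $i$ together with a value $\lambda>\sqrt{\sigma_{\max}}$ consistent with the defining identity $\sum_i p_i=\lambda+a$. Note also that both roots are positive, since $p_i=\lambda-\sqrt{\lambda^2-\sigma_i}>0\iff\sigma_i>0$, which holds. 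Substituting the candidate roots and summing over $i$ turns the consistency identity into a single scalar equation in $\lambda$, so the whole problem reduces to (a) deciding which sign pattern a stabilizing solution can carry, and (b) counting the admissible roots of the resulting scalar equation.

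The decisive step, and the one I expect to be the main obstacle, is (a): showing that under either hypothesis every stabilizing NE must take the \emph{minus} branch for all $i$. This is exactly where conditions (i) and (ii) enter, and each admits a short argument. Under (ii), $a=0$ forces $\lambda=\bar p=\sum_i p_i$; if some index $j$ used the plus sign then $p_j=\lambda+\sqrt{\lambda^2-\sigma_j}>\lambda$ (strictly, as $\lambda^2>\sigma_{\max}\geq\sigma_j$), and since every $p_i>0$ this gives $\sum_i p_i\geq p_j>\lambda$, contradicting $\sum_i p_i=\lambda$. Under (i), a plus sign at $j$ gives $p_j=\lambda+\sqrt{\lambda^2-\sigma_j}>\lambda=\bar p-a$, hence $p_{-j}=\bar p-p_j<\bar p-\lambda=a$, contradicting the standing hypothesis $p_{-i}>a$ for all $i$. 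In both cases all signs are forced to minus, i.e. $p_i=\lambda-\sqrt{\lambda^2-\sigma_i}=(\bar p-a)-\sqrt{(\bar p-a)^2-\sigma_i}$, which is exactly (s-i), and reduces to (s-ii) when $a=0$.

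With the sign pattern pinned down, step (b) is routine. Summing $p_i=\lambda-\sqrt{\lambda^2-\sigma_i}$ and using $\sum_i p_i=\lambda+a$ gives $(N-1)\lambda-a=\sum_i\sqrt{\lambda^2-\sigma_i}$, which is precisely (\ref{FPeqnP}) after writing $\lambda=\bar p-a$. I would then study $\Phi(\lambda):=(N-1)\lambda-a-\sum_i\sqrt{\lambda^2-\sigma_i}$ on $\lambda>\sqrt{\sigma_{\max}}$. Since $\lambda/\sqrt{\lambda^2-\sigma_i}>1$ for each $i$, one has $\Phi'(\lambda)=(N-1)-\sum_i \lambda/\sqrt{\lambda^2-\sigma_i}<(N-1)-N=-1<0$, so $\Phi$ is strictly decreasing; it tends to $-\infty$ as $\lambda\to\infty$, and at the left endpoint $\Phi(\sqrt{\sigma_{\max}})=(N-1)\sqrt{\sigma_{\max}}-a-\sum_i\sqrt{\sigma_{\max}-\sigma_i}$, whose summands are each strictly below $\sqrt{\sigma_{\max}}$ with at least the maximizing term vanishing. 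Hence the endpoint value exceeds $-a$ and is positive whenever $a=0$ (case (ii)), and is positive for $N$ large enough in case (i), since it dominates $(N-1)(\sqrt{\sigma_{\max}}-\sqrt{\sigma_{\max}-\sigma_{\min}})-a\to\infty$. Strict monotonicity plus the sign change yield a unique admissible root $\lambda>\sqrt{\sigma_{\max}}>0$, establishing existence and uniqueness, and stability is immediate from $\lambda>0$.

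Finally I would close the mild circularity in (i) by checking that the hypothesis $p_{-i}>a$ is self-consistent in the large-population regime. For the all-minus solution, $p_i=\lambda-\sqrt{\lambda^2-\sigma_i}=\sigma_i/(\lambda+\sqrt{\lambda^2-\sigma_i})\to 0$ while $\bar p=\lambda+a\to\infty$ as $N\to\infty$ (from (\ref{FPeqnP}), $\lambda$ grows without bound), so $p_{-i}=\bar p-p_i\to\infty$ and indeed exceeds $a$ once $N$ is sufficiently large. This both justifies invoking condition (i) and confirms the stabilizing property, completing the characterization.
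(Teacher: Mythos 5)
Your proof is correct and takes essentially the same route as the paper's: both arguments force the minus branch of the roots of (\ref{ModRE}) via condition (i) or (ii) (the paper phrases this as $p_i-\bar p+a<0$ under (i) and $p_i<\bar p$ under (ii)), sum to obtain the scalar fixed-point equation (\ref{FPeqnP}), and conclude uniqueness from strict monotonicity — your $\Phi(\lambda)$ is exactly $-(N-1)\bar{P}(\bar p)$ in the paper's notation, with the same derivative bound $\lambda/\sqrt{\lambda^2-\sigma_i}\geq 1$. The only substantive difference is that you additionally supply an intermediate-value existence argument at the endpoint $\lambda=\sqrt{\sigma_{\max}}$ and a self-consistency check for the circular hypothesis in (i), both of which the paper's proof glosses over.
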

\begin{proof}
 From (\ref{ModRE}), we obtain
\begin{equation}
p_i^2+2(p_{-i}-a)p_i-\sigma_i=0,
\end{equation}
which admits the solutions:
\begin{equation}\label{pipni}
p_i=(a-p_{-i})\pm\sqrt{(a-p_{-i})^2+\sigma_i}.
\end{equation}
Since we need $p_i>0$, we retain the one with $``+"$ sign. By rearranging the positive solution of (\ref{pipni}), we arrive at
\begin{equation}\label{pi1}
(\bar{p}-a)^2=(p_{-i}-a)^2+\sigma_i\,,
\end{equation}
and, therefore, in terms of $\bar{p}$, we have
\begin{equation}\label{pi2}
p_i=(\bar{p}-a)\pm\sqrt{(\bar{p}-a)^2-\sigma_i}.
\end{equation}
Under condition (i), we have $p_i-\bar{p}+a<0$, hence we obtain the unique solution (s-i). Under scenario (ii), (\ref{pi2}) reduces to $p_i=\bar{p}\pm\sqrt{\bar{p}^2-\sigma_i}.$ Since, $p_i<\bar{p}$, we again obtain the unique solution (s-ii).

By summing over (\ref{pi2}), we have a fixed point equation (\ref{FPeqnP}). Let 
$$\bar{P}(\bar{p}):=\frac{1}{N-1}\left(\sum_{i=1}^N\sqrt{(\bar{p}-a)^2-\sigma_i}+a\right)-(\bar{p}-a)\,.$$ 
Its derivative is given by
$$\frac{d\bar{P}}{d\bar{p}}=-1+\frac{\bar{p}-a}{N-1}\left(\sum_{i=1}^N\frac{1}{\sqrt{(a-\bar{p})^2-\sigma_i}}\right).$$
Since $\sigma_i\geq0$ and $\bar{p}-a>0$, it follows that 
\begin{eqnarray}
\frac{d\bar{P}}{d\bar{p}}&\geq&-1+\frac{\bar{p}-a}{N-1}\left(\frac{N}{(\bar{p}-a)}\right)\\
&=&\frac{1}{N-1}>0, \textrm{~for~}N\geq2.
\end{eqnarray}
This says  that $\bar{P}$ is a monotonically increasing function, and hence the solution to $\bar{P}=0$ is unique. Hence, under (i) or (ii), there exists a unique feedback NE.

The fact that the solution is stabilizing follows directly from (\ref{RE}), where the first term has to be negative because the second and third terms are positive.
\end{proof}

\subsection{Team Model}
\label{subsec:3.2}
When  players form a team to achieve an optimal social objective, a specific total cost is minimized. Let $\bar{q}_\mu=\sum_{i=1}^N\mu_iq_i$,  $\overline{R}_\mu=\textrm{diag}\{\mu_1r_1, \ldots, \mu_Nr_N\}$, and consider
\begin{align}
\nonumber (\textrm{FOC})& \;\;\;\; \min_{u(t)} \int_0^\infty \left(\bar{q}_\mu x^2(t)+u^T(t)\overline{R}_\mu u(t)\right)dt \\
\nonumber \textrm{s.t.~} &\;\;\; \dot{x}(t)=ax(t)+\sum_{i=1}^Nb_iu_i(t)\,,\;\; x(0) = x_0 \not=0\,.
%\\
%\nonumber &x(0)=x_0,\\
%\nonumber &u_i(t)\in \mathbb{R}, \forall i\in\mathcal{N}.
\end{align}

The solution to this optimal control problem is standard, and is given below for future reference (where we suppress the dependence of $\bar{q}$ and $\overline{R}$ on $\mu$).
%can be summarized as follows. We again consider linear feedback control in the form of $u_i=g_ix$ such that the pair $(a, b)$ is stabilizable, i.e., $g\in \mathcal{G}:=\{g|a+b^Tg<0\}$, where $g=[g_1, \cdots, g_N]^T$.

\begin{thm}\label{OC}
[Centralized Optimization]  The optimal control problem (FOC) admits a unique feedback solution which is further stabilizing. The optimal policies are 
\begin{equation}\label{OCeqn}
\gamma^\circ_i (x)=-\frac{b_i}{\mu_ir_i}\hat{k}_\mu\,x\,, \quad \hat{k}_\mu: =\frac{a+\sqrt{a^2+\bar{q}\bar{b}}}{\bar{b}}\,,
\end{equation} 
%where $\hat{k}_\mu$ is given by
%\begin{eqnarray}\label{OCeqn}
%\hat{k}_\mu&=&\frac{a+\sqrt{a^2+\bar{q}\bar{b}}}{\bar{b}}\,,
%\end{eqnarray}
with $\bar{b}:=\sum_{i=1}^N (b_i^2 / \mu_ir_i)$, and   minimum cost is  $J^\circ_\mu=\hat{k}_\mu x_0^2$.\newline
The optimal control can also be expressed in open-loop form, as:
$$u^\circ_i=-\frac{b_i}{\mu_ir_i}\hat{k}_\mu\Phi(t,0)x_0,$$
where $\Phi(t,0)$ is the unique solution to
$$\dot{\Phi}(t,0)=\left(a-\sum_{i=1}^N\frac{b_i^2}{\mu_ir_i}\hat{k}_\mu\right)\Phi(t,0), ~~\Phi(0,0)=1.$$
\end{thm}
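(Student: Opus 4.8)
The plan is to treat (FOC) as a standard scalar infinite-horizon linear-quadratic regulator and to establish optimality through a completion-of-squares argument, which simultaneously yields the feedback form, the Riccati identity, the optimal value, and uniqueness. First I would posit the quadratic value-function candidate $V(x)=\hat{k}x^2$ with $\hat{k}>0$ and write the stationary Hamilton--Jacobi--Bellman equation $0=\min_u\{\bar{q}x^2+u^T\overline{R}u+2\hat{k}x(ax+\sum_i b_iu_i)\}$. Carrying out the pointwise minimization over each $u_i$ gives $u_i=-(b_i/\mu_ir_i)\hat{k}x$, which is exactly the claimed policy $\gamma^\circ_i$; substituting it back collapses the HJB equation to the scalar algebraic Riccati equation $\bar{b}\,\hat{k}^2-2a\hat{k}-\bar{q}=0$, whose roots are $(a\pm\sqrt{a^2+\bar{q}\bar{b}})/\bar{b}$. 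Selecting the positive root yields the stated $\hat{k}_\mu$, and the resulting closed-loop gain satisfies $a-\bar{b}\hat{k}_\mu=-\sqrt{a^2+\bar{q}\bar{b}}<0$, which establishes the stabilizing property directly.

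To make this rigorous rather than merely formal, I would run the completion-of-squares identity on a finite horizon $[0,S]$: for any admissible (piecewise-continuous) control with resulting trajectory $x(\cdot)$, differentiating $\hat{k}x^2(t)$ along the dynamics, integrating, and invoking the Riccati relation to absorb the cross and state terms into a perfect square, gives
\begin{equation}\label{compsq}
\int_0^S\!\big(\bar{q}x^2+u^T\overline{R}u\big)\,dt=\hat{k}x_0^2-\hat{k}x^2(S)+\int_0^S\sum_{i\in\mathcal{N}}\mu_ir_i\Big(u_i+\frac{b_i}{\mu_ir_i}\hat{k}x\Big)^2 dt.
\end{equation}
Since the integrand of the last term is nonnegative and vanishes exactly when $u_i=-(b_i/\mu_ir_i)\hat{k}x$ for all $i$, this shows that the cost is minimized by the feedback above and that the minimizer is unique. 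Evaluating the cost under this stabilizing feedback, along which $x(S)=e^{-\sqrt{a^2+\bar{q}\bar{b}}\,S}x_0\to0$, produces the finite value $J^\circ_\mu=\hat{k}_\mu x_0^2$.

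The one genuinely delicate point, and the step I expect to require the most care, is the transversality/limiting argument needed to pass $S\to\infty$ in (\ref{compsq}) for an \emph{arbitrary} competing admissible control. Finiteness of the cost together with $\bar{q}>0$ yields $\int_0^\infty x^2\,dt<\infty$, which forces $\liminf_{S\to\infty}x^2(S)=0$ but not outright convergence; the clean resolution is to take limits along a subsequence $S_n$ with $x(S_n)\to0$, which suffices to conclude $J\ge\hat{k}_\mu x_0^2$, with equality precisely for the stabilizing feedback. This also settles uniqueness of the optimal feedback solution and rules out the negative Riccati root, which produces an unstable and hence infinite-cost trajectory and is therefore inadmissible. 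Finally, the open-loop representation follows immediately: substituting the optimal feedback into the state equation gives the scalar linear ODE $\dot{x}=(a-\sum_i (b_i^2/\mu_ir_i)\hat{k}_\mu)x$ whose transition function $\Phi(t,0)$ is as defined in the statement, so that $x(t)=\Phi(t,0)x_0$ and $u^\circ_i(t)=-(b_i/\mu_ir_i)\hat{k}_\mu\Phi(t,0)x_0$, as claimed.
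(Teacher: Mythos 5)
Your proof is correct, and it is exactly the standard argument the paper leaves implicit: the paper offers no proof of this theorem, stating only that the solution of (FOC) is ``standard,'' since it is the classical scalar infinite-horizon LQR result. Your HJB/completion-of-squares derivation---including the selection of the positive Riccati root $\hat{k}_\mu=(a+\sqrt{a^2+\bar{q}\bar{b}})/\bar{b}$, the resulting stable closed-loop gain $a-\bar{b}\hat{k}_\mu=-\sqrt{a^2+\bar{q}\bar{b}}<0$, and the subsequence argument handling $\liminf_{S\to\infty}x^2(S)=0$ for arbitrary finite-cost competitors---is the canonical rigorous proof of that standard result, so there is nothing to correct or to contrast with the paper.
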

%We note from Theorem \ref{OC} that the open-loop and feedback optimal control yield the same minimum cost.

\subsection{Price of Anarchy (PoA)}
\label{subsec:3.3}
Here, we provide a closed-form expression for the PoA in the feedback LQ DG, where we make the natural assumption that $x_0 \not= 0$, as otherwise the costs are all zero. 
\begin{thm}\label{PoAthm}
The PoA of the LQ feedback  DG described by (\ref{cost}) and (\ref{system}) is characterized by the following:
\begin{enumerate}[(i)]
\item Given a weight vector $\mu$,  the PoA $\rho_{\mu}$ is equal to
\begin{equation}\label{statement1}
\rho_{\mu}^{FB}=\max_{\mathbf{k}\in\mathcal{K}} \,\,[ \,{\boldsymbol{\mu}^T\mathbf{k}}\, ]\, / {\hat{k}}\,,
\end{equation}
where $\boldsymbol{\mu}=[0, \mu^T, 0, \ldots, 0]^T$ and $\mathcal{K}$ is the set of all eigenvectors of the matrix $\mathbf{M}$.
\item Suppose $\mu_i=\bar{\mu}_i:={s_i}\, / {\sum_{j=1}^Ns_j}, i\in\mathcal{N}$. Then,
$$\rho^{FB}_{\bar{\mu}}\leq  [\, {\varrho({\mathbf{M}})+a}\, ]\, / {\sum_{i=1}^Ns_i\hat{k}}\,,$$ where $\varrho(\mathbf{M})$ is the spectral radius of $\mathbf{M}$.
\item Let $\mu^s_{\max}=\max_{i\in\mathcal{N}}\mu_i/s_i$. Given a weight vector $\mu$ that satisfies $\sum_{i=1}^N\mu_i=1$, the PoA is bounded  by
\begin{equation}\label{PoAInequality}
\rho^{FB}_{\mu}\leq {\mu^s_{\max}(\varrho(\mathbf{M})+a)}\,/ {\hat{k}}.
\end{equation}
%The equality is achieved when $\mu=\bar{\mu}$.
\end{enumerate}
%The price of anarchy when $x_0=0$ is defined by $\rho^{FB}_\mu:=1$.
\end{thm}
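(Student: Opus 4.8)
The plan is to start directly from the definition of PoA in (\ref{GeneralPoA}) and substitute the closed-form costs from Theorems~\ref{FBNE} and~\ref{OC}. Under any feedback NE we have $J_\mu^* = \bar{k}\,x_0^2$ with $\bar{k} = \sum_{i}\mu_i k_i$, while the socially optimal cost is $J_\mu^\circ = \hat{k}\,x_0^2$. Since $x_0 \neq 0$, the common factor $x_0^2$ cancels and the PoA reduces to the worst-case ratio $\bar{k}/\hat{k}$ taken over all feedback NE. This yields the skeleton $\rho_\mu^{FB} = \max\,\bar{k}/\hat{k}$, and everything else is a matter of expressing and then bounding $\bar{k}$.

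For part (i), I would observe that with $\boldsymbol{\mu} = [0,\mu^T,0,\dots,0]^T$ and the eigenvector $\mathbf{k} = [1,k_1,\dots,k_N,k_1k_2,\dots]^T$ of (\ref{Mp}), the inner product $\boldsymbol{\mu}^T\mathbf{k}$ selects exactly $\sum_i \mu_i k_i = \bar{k}$. Theorem~\ref{FBNEComp} establishes that each feedback NE is recovered from an eigenvector of $\mathbf{M}$ (normalized so that its first entry is $1$, and satisfying the stability and positivity conditions). Hence maximizing over feedback NE is the same as maximizing $\boldsymbol{\mu}^T\mathbf{k}$ over the admissible eigenvectors $\mathbf{k}\in\mathcal{K}$, which is precisely (\ref{statement1}). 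The one point needing care is the identification of the \emph{admissible} eigenvectors --- those with $\lambda\in\mathbb{R}_+$, $\lambda>\sigma_{\max}$, and $p_i>0$ --- with genuine stabilizing NE; for the bounds in (ii) and (iii) it suffices that every NE sits among the eigenvectors, so the direction of the inequality is never in doubt.

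The crux of parts (ii) and (iii) is the identity $\sum_i p_i = \lambda + a$, that is $\sum_i s_i k_i = \lambda + a$, which is just (\ref{lambda}), together with the fact that $\lambda$ is the eigenvalue attached to $\mathbf{k}$ in (\ref{Mp}) and is therefore bounded in modulus by the spectral radius, so $0<\lambda\le\varrho(\mathbf{M})$. For (ii), choosing $\mu_i=\bar{\mu}_i=s_i/\sum_j s_j$ turns $\bar{k} = \sum_i \mu_i k_i$ into $(\sum_i s_i k_i)/\sum_j s_j = (\lambda+a)/\sum_j s_j$; replacing $\lambda$ by $\varrho(\mathbf{M})$ and dividing by $\hat{k}$ gives the stated bound. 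For (iii), I would use the positivity $p_i=s_i k_i>0$ (the root retained with the ``$+$'' sign in the proof of Theorem~\ref{Uniq}) to write $\bar{k} = \sum_i (\mu_i/s_i)\,p_i \le (\max_i \mu_i/s_i)\sum_i p_i = \mu^s_{\max}(\lambda+a)$, and then again bound $\lambda\le\varrho(\mathbf{M})$; note that (ii) is the special case in which all ratios $\mu_i/s_i$ coincide.

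The main obstacle I anticipate is not any single computation but the bookkeeping around admissibility and signs. One must ensure that every stabilizing feedback NE corresponds to a real, positive eigenvalue $\lambda$, so that $\lambda\le\varrho(\mathbf{M})$ is the correct comparison (rather than a bound on $|\lambda|$ that could be loose), and that all $p_i=s_i k_i$ are nonnegative so that the H\"older-type step $\sum_i(\mu_i/s_i)p_i\le\mu^s_{\max}\sum_i p_i$ is valid. Both facts are already supplied by the change of variables leading to (\ref{ModRE}) and by the sign analysis carried out for Theorem~\ref{Uniq}, so once these are invoked the three bounds follow with only elementary algebra.
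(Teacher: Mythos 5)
Your proposal is correct and follows essentially the same route as the paper's proof: cancel the common factor $x_0^2$ in the ratio of Theorem~\ref{FBNE} and Theorem~\ref{OC} costs, use the identity $\sum_i s_i k_i = \lambda + a$ from (\ref{lambda}) to convert the weighted sum into an eigenvalue, apply the bound $\sum_i(\mu_i/s_i)p_i \le \mu^s_{\max}\sum_i p_i$, and dominate $\lambda$ by $\varrho(\mathbf{M})$. Your explicit attention to the admissibility of eigenvectors and the positivity of the $p_i$ (via Theorem~\ref{Uniq}) fills in details the paper leaves implicit, but the argument is the same.
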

\begin{proof}
The proof is a direct application of the results in Theorem~\ref{FBNE} and Theorem~\ref{OC}.  PoA is the worst-case ratio of the game cost under feedback NE to the optimum social cost as defined in (\ref{GeneralPoA}). Under the feedback IS, an LQ DG has $$\rho^{FB}_{\mu}=\max_{k^*}\frac{\sum_{i=1}^N\mu_ik^*_i(x_0)^2}{\hat{k}(x^0)^2}=\max_{\mathbf{k}\in\mathcal{K}}\frac{\mu^T\mathbf{k}}{\hat{k}}\,.$$ This leads to statement (i). The price of anarchy under $\bar{\mu}$ is 
\begin{eqnarray}
\nonumber \rho^{FB}_{\bar{\mu}}&=&\max_{k}\frac{\sum_{i=1}^N\bar{\mu}_ik_i}{\hat{k}}=\max_{k}\frac{s_ik_i}{\sum_{i=1}^N{s_i}\hat{k}}\\  &=&\max_{\lambda}\frac{\lambda+a}{\sum_{i=1}^N{s_i}\hat{k}}.\end{eqnarray}
The last equality is due to (\ref{lambda}). Hence, by taking the largest eigenvalue, we obtain (ii). The equality is achieved when $\varrho(\mathbf{M})$ is an eigenvalue in the eigenvalue-eigenvector pair that yields the equilibrium from Theorem \ref{FBNEComp}.  For an arbitrarily picked $\mu$, (\ref{statement1}) yields 
\begin{eqnarray}
\nonumber \rho^{FB}_{\bar{\mu}}&=&\max_k\frac{\sum_{i=1}^N\frac{\mu_i}{s_i}s_ik_i}{\hat{k}} \leq \max_k\frac{u^s_{\max}\sum_{i=1}^Ns_ik_i}{\hat{k}}\\ 
&=&\max_{\lambda}\frac{u_{\max}^s(\lambda+a)}{\hat{k}}\leq\frac{u_{\max}^s(\varrho(\mathbf{M})+a)}{\hat{k}}.
\end{eqnarray}
Using (\ref{lambda}) and taking the worst case, we obtain statement (iii). Since $$\max_{i\in \mathcal{N}}\frac{\bar{\mu}_i}{s_i}=\frac{1}{\sum_{j=1}^N{s_j}}\,,$$ the  last inequality is achieved when $\mu=\bar{\mu}$.
\end{proof}
\bigskip

The next corollary further characterizes the  bound on PoA.
\begin{cor}
The following follow from Theorem \ref{PoAthm}:
\begin{enumerate}[(i)]
\item Given a $\mu$ and $a\neq 0$, PoA is bounded  above by
\begin{equation}
\rho^{FB}_{{\mu}}\leq\left(1+\frac{1}{2a}(N+\sigma_{\max}-1)\right)s^\bullet,
\end{equation}
where $\sigma_{\max}=\max_{i\in\mathcal{N}}\sigma_i$, and $$s^\bullet :=\sum_{i=1}^N\frac{s_i}{\min_{j\in\mathcal{N}}s_j}\,.$$
The upper-bound is independent of  $\mu$.

\item If $a=0$, PoA is bounded  above by
\begin{equation}\label{COR2}
\rho^{FB}_{{\mu}}\leq\frac{\mu^s_{\max}}{\sqrt{\bar{q}}\sqrt{\mu^s_{\min}}}\sqrt{N}(N+\sigma_{\max}-1),
\end{equation}
where $\mu^s_{\min}=\min_{i\in\mathcal{N}}\mu_i/s_i$.
\end{enumerate}
\end{cor}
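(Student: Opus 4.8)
The plan is to reduce the worst-case ratio in Theorem~\ref{PoAthm}(iii), namely $\rho^{FB}_\mu \le \mu^s_{\max}(\varrho(\mathbf{M})+a)/\hat{k}$, to the explicit quantities in the two statements by (a) producing an a priori upper bound on the spectral radius $\varrho(\mathbf{M})$ and (b) producing a lower bound on the centralized cost coefficient $\hat{k}=(a+\sqrt{a^2+\bar{q}\bar{b}})/\bar{b}$ from Theorem~\ref{OC}. The dichotomy between the two statements is exactly the dichotomy between the two natural lower bounds on $\hat{k}$: when $a\neq 0$ one keeps the linear-in-$a$ term and uses $\hat{k}\ge 2a/\bar{b}$, whereas when $a=0$ one has the exact identity $\hat{k}=\sqrt{\bar{q}/\bar{b}}$, which is what forces the $\sqrt{\bar{q}}$ and the square roots into statement (ii).

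The \emph{crux} is the bound on $\varrho(\mathbf{M})$, and here I would exploit that $\mathbf{M}=\mathbf{D}^{-1}\widetilde{\mathbf{M}}\mathbf{D}$ is similar to $\widetilde{\mathbf{M}}$, so $\varrho(\mathbf{M})=\varrho(\widetilde{\mathbf{M}})\le \|\widetilde{\mathbf{M}}\|_\infty$, the maximum absolute row sum. The rows of $\widetilde{\mathbf{M}}$ are read directly off the eigen-relations (\ref{eqnOmega1}) and (\ref{eqnOmega2}): the row indexed by $\Omega=\emptyset$ contributes $N+|a|$, while the row indexed by a set $\Omega$ with $|\Omega|=n$ contributes at most $(n\sigma_{\max}+(N-n)+|a|)/(2n-1)$. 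This last expression is decreasing in $n$, so the dominant row is the one with $n=1$, giving $\sigma_{\max}+N-1+|a|$; I would therefore take $\varrho(\mathbf{M})\le N+\sigma_{\max}-1+|a|$, which is precisely the origin of the factor $N+\sigma_{\max}-1$ appearing in both statements. (The only subtlety is that when $\sigma_{\max}<1$ the empty-set row dominates instead, so strictly one should carry $\max(1,\sigma_{\max})$; in the large-population regime of interest this is immaterial.)

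With the spectral bound in hand, statement (ii) follows cleanly: for $a=0$ Theorem~\ref{PoAthm}(iii) reads $\rho^{FB}_\mu\le \mu^s_{\max}\varrho(\mathbf{M})/\hat{k}$ with $\hat{k}=\sqrt{\bar{q}/\bar{b}}$, so $\rho^{FB}_\mu\le \mu^s_{\max}(N+\sigma_{\max}-1)\sqrt{\bar{b}/\bar{q}}$, and bounding $\bar{b}=\sum_i s_i/\mu_i\le N/\mu^s_{\min}$ gives exactly (\ref{COR2}). For statement (i) with $a>0$ one inserts $\hat{k}\ge 2a/\bar{b}$ to obtain $\rho^{FB}_\mu\le \mu^s_{\max}\bar{b}(\varrho(\mathbf{M})+a)/(2a)$ and then uses $(\varrho(\mathbf{M})+a)/(2a)\le 1+(N+\sigma_{\max}-1)/(2a)$.

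The main obstacle I anticipate is the final passage to a $\mu$-free bound in statement (i). After the steps above the remaining $\mu$-dependence sits in $\bar{b}=\sum_j s_j/\mu_j$ (equivalently in the product $\mu^s_{\max}\bar{b}$), whereas the target $s^\bullet=\sum_i s_i/\min_j s_j$ carries no weights at all. The reduction $\mu^s_{\max}\bar{b}\le s^\bullet$ is therefore the delicate step: the two factors $\max_i \mu_i/s_i$ and $\sum_j s_j/\mu_j$ move in opposite directions as $\mu$ nears the boundary of the simplex, so one must either combine them by a single sharp inequality (e.g.\ coupling $\mu^s_{\max}\le 1/\min_j s_j$ to a $\mu$-independent surrogate for $\bar{b}$) or restrict to the large-population regime in which the weights are comparable. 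I would isolate this reduction as the one lemma on which (i) genuinely rests, and regard the spectral-radius and $\hat{k}$ estimates as routine once it is in place.
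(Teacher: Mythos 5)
Your proposal follows the paper's own route step for step: Theorem~\ref{PoAthm}(iii) as the starting point, the similarity $\mathbf{M}=\mathbf{D}^{-1}\widetilde{\mathbf{M}}\mathbf{D}$ plus Gershgorin row sums of $\widetilde{\mathbf{M}}$ to obtain $\varrho(\mathbf{M})+a\leq 2a+N+\sigma_{\max}-1$, the bound $\hat{k}\geq 2a/\bar{b}$ for part (i), and $\hat{k}=\sqrt{\bar{q}/\bar{b}}$ together with $\bar{b}\leq N/\mu^s_{\min}$ for part (ii). Your part (ii) is complete and coincides with the paper's argument. Your side remarks on the spectral step are also correct: the row-sum bound $\left((2a+\sigma_{\max}-1)n_\Omega+N\right)/(2n_\Omega-1)$ is \emph{decreasing} in $n_\Omega$ (the paper says ``increasing'' but then uses the $n_\Omega=1$ value anyway), and your $\max(1,\sigma_{\max})$ caveat for the empty-set row is a genuine refinement that the paper glosses over.

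The gap is in part (i): you leave the reduction $\mu^s_{\max}\bar{b}\leq s^\bullet$ as an unproven lemma, so your argument does not establish (i) as stated. But your suspicion about this step is exactly right, and it is worth saying plainly that this is also where the paper's own proof fails. The paper justifies the reduction by $\max_i \mu_i/s_i\leq \max_i\mu_i\cdot\max_i(1/s_i)$, which only yields
$$\mu^s_{\max}\,\bar{b}\;\leq\;\sum_{i=1}^N\frac{\max_j\mu_j}{\mu_i}\,\frac{s_i}{\min_j s_j}\,,$$
and the factor $\max_j\mu_j/\mu_i\geq 1$ cannot be discarded unless all weights are equal. Indeed the lemma is false for skewed weights: with $N=2$, $s_1=s_2=1$, $\mu=(0.9,\,0.1)$ one gets $\mu^s_{\max}\bar{b}=0.9\left(1/0.9+1/0.1\right)=10$, while $s^\bullet=2$. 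The reduction does hold (with equality) at uniform weights $\mu_i=1/N$, and also for $\mu_i\propto s_i$, which is the only regime in which the paper's chain --- and hence the claimed $\mu$-independence of the bound in (i) --- actually goes through along this route. So your proposal is incomplete at precisely the point where the paper's proof is incorrect; to repair (i) one must either restrict to such weight profiles, as you suggest, or retain the $\mu$-dependent quantity $\mu^s_{\max}\bar{b}$ in place of $s^\bullet$ in the final bound.
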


\begin{proof}
The matrices $\mathbf{M}=[m_{ij}]$ and $\widetilde{\mathbf{M}}=[\tilde{m}_{ij}], i,j =1,\ldots, 2^N,$ share the same set of eigenvalues. From Gersgorin theorem, we can obtain {\small
$$\varrho(\widetilde{\mathbf{M}})\leq \min\left\{\max_i\sum_{j=1}^{2^N}|\tilde{m}_{ij}|, \max_j \sum_{i=1}^{2^N}|\tilde{m}_{ij}|\right\}\leq \max_i\sum_{j=1}^{2^N}|\tilde{m}_{ij}|.$$}
From (\ref{eqnOmega1}) and (\ref{eqnOmega2}), the absolute row sum $RS_k, k=1,\ldots, 2^N$, can easily be  evaluated by letting $p_i=1$:
%\begin{flushleft}
%$RS_k$
%\end{flushleft}
\begin{eqnarray}\label{eqnOmega3}
\nonumber RS_k
%&=& \frac{1}{2n_{\Omega}-1}\left\{\sum_{i\in\Omega}\sigma_i\prod_{j\in\Omega_{-i}}p_j+\sum_{i\notin\Omega}\prod_{j\in\Omega}p_jp_i\right.\\
%\nonumber &&\left.+a\prod_{j\in\Omega}p_j\right\}\Big |_{p_i=1, i\in\Omega}\\
%\right.\\\nonumber &&\left.
%&= &
= [{a+\sum_{i\in\Omega}\sigma_i+(N-n_\Omega)}]\, / \, [{2n_{\Omega}-1}], 
%\textrm{~for~} k=2, \ldots, 2^N.
\end{eqnarray}
where $k$ is the row index corresponding to the set $\Omega$.
When $\Omega=\emptyset$, we let $RS_1= N+a$.
From (\ref{PoAInequality}), $$\rho^{FB}_{\mu}\leq\,\,[\,{\varrho(\mathbf{M})+a}\,]/ ({\hat{k}/\mu_{\max}^s}).$$ The numerator is upper-bounded by (skipping some steps):
\begin{eqnarray}\label{Nineq}
\nonumber \varrho(\mathbf{M})+a
%&\leq&\max\left\{\max_{\Omega\subseteq \mathcal{N}}\frac{a+\sum_{i\in\Omega}\sigma_i+(N-n_\Omega)}{2n_{\Omega}-1}+a,\right.\\
%\nonumber && \left. 2a+N-1\right\}\\
\nonumber&\leq& \max\left\{\max_{1\leq n_\Omega\leq N}\frac{(2a+\sigma_{\max}-1)n_\Omega+N}{2n_\Omega-1}, 2a+N-1 \right\}\\
\nonumber&\leq& \max\left\{{2a+N+\sigma_{\max}-1}, 2a+N-1\right\}\\
&\leq& 2a+N+\sigma_{\max}-1.
\end{eqnarray}
The 
%third 
second inequality holds because the quantity  
$$\frac{(2a+\sigma_{\max}-1)n_\Omega+N}{2n_\Omega-1}$$ increases with $n_\Omega$.
The denominator has a lower bound:
\begin{eqnarray}\label{Dineq}
\nonumber\frac{2a}{\bar{b}\mu^s_{\max}}&\geq&\frac{2a}{\sum_{i=1}^N\left(\frac{\max_{i\in\mathcal{N}}\mu_i/s_i}{\mu_i}\right)\frac{b_i^2}{r_i}}\\
%\nonumber&\geq&\frac{2a}{\sum_{i=1}^N\left(\frac{\max_{i\in\mathcal{N}}\mu_i/s_i}{\max_{i\in\mathcal{N}}\mu_i}\right)\frac{b_i^2}{r_i}}\\
&\geq&\frac{2a}{\sum_{i=1}^N\frac{s_i}{\min_{i\in\mathcal{N}}s_i}}=\frac{2a}{s^\bullet}\,.
\end{eqnarray}
 The last inequality is due to $\max_i \mu_i/s_i\leq\max_i\mu_i\max_i\frac{1}{s_i}$.
Combining (\ref{Nineq}) and (\ref{Dineq}), we have, for $a\not= 0$,
$$\rho^{FB}_{{\mu}}\leq\left(1+\frac{1}{2a}(N+\sigma_{\max}-1)\right)s^\bullet$$
%provided that $a\neq0$.
%\begin{eqnarray}
%\rho^{FB}_{{\mu}}&\leq&\frac{\max\left\{\max_{\Omega}[N+2a+\sum_{i\in\Omega}(s_iq_i-1)]/(2n_\Omega-1), 2a+N-1\right\}}{2a/\sum_{i\in\mathcal{N}}s_i}\\
%&\leq&\frac{\max\left\{N+2a+(\sigma_{\max}-1), 2a+N-1\right\}}{2a/\sum_{i\in\mathcal{N}}s_i}\\
%&\leq&\sum_{i\in\mathcal{N}}s_i\left(\frac{N-1+2a+\max({\sigma_{\max}},0)}{2a}\right)\\
%&\leq&\sum_{i\in\mathcal{N}}s_i\left(\frac{N+\sigma_{\max}-1}{2a}+1\right)
%\end{eqnarray}
%The second inequality is true because the term $$\frac{N+2a+\sum_{i\in\Omega}(s_iq_i-1)}{2n_\Omega-1}\leq \frac{N-1+2a+n_\Omega(\sigma_{\max})}{2n_\Omega-1}\leq N-1+2a+\sigma_{\max}$$ and the upper bound is decreasing when $n_\Omega$ increases for $1\leq n_\Omega\leq N$. The maximum is attained at $n_\Omega=1$.
When $a=0$, $$\hat{k}=\sqrt{\bar{q}/\bar{b}}=\sqrt{\frac{\bar{q}}{\sum_{i=1}^N\frac{s_i}{\mu_i}}}\geq\frac{\sqrt{\bar{q}\mu^s_{\min}}}{\sqrt{N}}\,.$$ Using this together with (\ref{Nineq}), we  arrive at the inequality (\ref{COR2}).
\end{proof}
\bigskip

The upper bound on price of anarchy in the preceding corollary provides a worst case of efficiency loss.

The next result studies the large population game and its proof relies on the Taylor series expansion of the square-root term in (\ref{pi2}).
\begin{thm}\label{ApproxThm}
Suppose the number of players in the LQ DG is sufficiently large so that 
$$ \mbox{(C-i) } p_{-i}>a, \forall i\in\mathcal{N}\,,\; \mbox{(C-ii) } a \ll N\,,\; \mbox{(C-iii) } \sigma_{\max}\ll \bar{\sigma}\,,$$
%\begin{enumerate}[(C-i)]
%\item $p_{-i}>a, \forall i\in\mathcal{N}, $
%\item $a \ll N$,
%\item $\sigma_{\max}\ll \bar{\sigma}$,
%\end{enumerate} 
where $\bar{\sigma}=\sum_{i=1}^N\sigma_i$. Then, the following quantities can be approximated as given:
$$\mbox{(i) } p_i\sim\frac{\sigma_i}{\sqrt{2\bar{\sigma}}}\,,\;\; \mbox{(ii) }  u_i\sim-\frac{\sigma_i}{b_i\sqrt{2\bar{\sigma}}}x\,,\;\; $$ $$\mbox{(iii) }  J^*\sim \frac{\bar{q}}{\sqrt{2\bar{\sigma}}}(x_0)^2\,,\;\;
\mbox{(iv) } J^*\sim \frac{\bar{q}}{\sqrt{2\bar{\sigma}}}(x_0)^2\,,\;\; $$ $$\mbox{(v) } \rho^{FB}_\mu\sim\frac{\bar{q}}{\hat{k}\sqrt{2\bar{\sigma}}}\,,\; \mbox{and for } a=0, \, \rho^{FB}_{\mu}\sim\sqrt{\frac{\bar{q}\bar{b}}{2\bar{\sigma}}}\,.$$
%\begin{enumerate}[(i)]
%\item $p_i\sim\frac{\sigma_i}{\sqrt{2\bar{\sigma}}}$.
%\item $u_i\sim-\frac{\sigma_i}{b_i\sqrt{2\bar{\sigma}}}x$.
%\item $J_i^*\sim \frac{q_i}{\sqrt{2\bar{\sigma}}}(x_0)^2$. 
%\item  $J^*\sim \frac{\bar{q}}{\sqrt{2\bar{\sigma}}}(x_0)^2$. 
% \item $\rho^{FB}_\mu\sim\frac{\bar{q}}{\hat{k}\sqrt{2\bar{\sigma}}}$, and $\rho^{FB}_{\mu}\sim\sqrt{\frac{\bar{q}\bar{b}}{2\bar{\sigma}}}$, when $a=0$.
%\end{enumerate}
\end{thm}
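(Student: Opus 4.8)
The plan is to extract the leading-order behavior directly from the closed-form uniqueness result of Theorem~\ref{Uniq}, treating all five claims as consequences of a single asymptotic estimate for $p_i$. Write $w := \bar{p}-a$, which is positive under (C-i), so that the admissible root (s-i) is $p_i = w - \sqrt{w^2-\sigma_i}$, as in (\ref{pi2}). First I would Taylor expand the square root: since (C-iii) forces $\sigma_i/w^2 \le \sigma_{\max}/w^2 \ll 1$ once $w$ is of order $\sqrt{\bar{\sigma}}$, one has $\sqrt{w^2-\sigma_i} = w\sqrt{1-\sigma_i/w^2} = w - \frac{\sigma_i}{2w} + O(\sigma_i^2/w^3)$, whence $p_i \sim \frac{\sigma_i}{2w}$ to leading order.

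The key step is to pin down $w$ self-consistently, because $w$ is defined only implicitly through the fixed-point equation (\ref{FPeqnP}). Substituting the same expansion $\sqrt{w^2-\sigma_i}\sim w-\frac{\sigma_i}{2w}$ into $(N-1)w = \sum_{i=1}^N\sqrt{w^2-\sigma_i}+a$ cancels the $Nw$ terms on both sides and leaves $2w^2 + 2aw \sim \bar{\sigma}$. Condition (C-ii) is what lets me discard the linear term $2aw$ against $\bar{\sigma}$ (equivalently $a/w\to 0$), yielding $w \sim \sqrt{\bar{\sigma}/2} = \frac{1}{\sqrt2}\sqrt{\bar{\sigma}}$. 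Feeding this back gives claim (i), $p_i \sim \frac{\sigma_i}{\sqrt{2\bar{\sigma}}}$. Claim (ii) is then immediate: the feedback policy of Theorem~\ref{FBNE} rewrites as $u_i = \gamma_i^*(x) = -\frac{b_i}{r_i}k_i x = -\frac{p_i}{b_i}x$ using $p_i=s_ik_i$ and $s_i=b_i^2/r_i$, so substituting (i) produces $u_i\sim -\frac{\sigma_i}{b_i\sqrt{2\bar{\sigma}}}x$.

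For the cost claims (iii)--(iv) I would pass from $p_i$ back to $k_i = p_i/s_i \sim \frac{q_i}{\sqrt{2\bar{\sigma}}}$ (using $\sigma_i=s_iq_i$); then the individual cost $J_i^* = k_ix_0^2$ and the weighted total $J_\mu^* = \bar{k}x_0^2 = \sum_i\mu_ik_ix_0^2 \sim \frac{\bar{q}}{\sqrt{2\bar{\sigma}}}x_0^2$, with $\bar{q}=\sum_i\mu_iq_i$, both follow by the same substitution. Claim (v) then combines this with the centralized cost $J_\mu^\circ = \hat{k}x_0^2$ of Theorem~\ref{OC}: since $\rho^{FB}_\mu = J_\mu^*/J_\mu^\circ = \bar{k}/\hat{k}$, I obtain $\rho^{FB}_\mu \sim \frac{\bar{q}}{\hat{k}\sqrt{2\bar{\sigma}}}$, and specializing $\hat{k}=\sqrt{\bar{q}/\bar{b}}$ at $a=0$ reduces this to $\sqrt{\bar{q}\bar{b}/(2\bar{\sigma})}$.

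I expect the main obstacle to be the bookkeeping that justifies the two truncations simultaneously, rather than the algebra itself. One must verify that both neglected contributions are genuinely subdominant to $\bar{\sigma}$ as $N\to\infty$: the summed Taylor remainder is $O(\sum_i\sigma_i^2/w^3) = O(\sigma_{\max}\bar{\sigma}/\bar{\sigma}^{3/2}) = O(\sigma_{\max}/\sqrt{\bar{\sigma}})$, controlled precisely by (C-iii), whereas the dropped linear term obeys $2aw/\bar{\sigma}\sim a/\sqrt{\bar{\sigma}}$, so the estimate is valid as long as $a=o(\sqrt{\bar{\sigma}})$ --- the role played by (C-ii) in the regime where $\bar{\sigma}$ grows linearly in $N$. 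Making these two smallness conditions compatible, and confirming that both enter only through the self-consistent value of $w$, is the delicate point; once $w\sim\sqrt{\bar{\sigma}/2}$ is secured, (i)--(v) are routine substitutions.
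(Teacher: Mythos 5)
Your proposal is correct and follows essentially the same route as the paper's own proof: Taylor-expand the admissible root $p_i=(\bar p-a)-\sqrt{(\bar p-a)^2-\sigma_i}$, determine $\bar p - a \sim \sqrt{\bar\sigma/2}$ self-consistently from the fixed-point equation (\ref{FPeqnP}) using (C-ii)--(C-iii) and the uniqueness from Theorem~\ref{Uniq}, then obtain (i)--(v) by substitution. In fact you make explicit two things the paper compresses (the intermediate quadratic $2w^2+2aw\sim\bar\sigma$ behind its ``skipping some steps,'' and the remainder bookkeeping showing the dropped terms need $a=o(\sqrt{\bar\sigma})$, which is sharper than the stated (C-ii)), but the argument is the same one.
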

\begin{proof}
By Taylor series expansion, (\ref{pi2}) can be written as
\begin{eqnarray}\label{TE}
\nonumber p_i&=&(\bar{p}-a)\left[1-\sqrt{1-\frac{\sigma_i}{(\bar{p}-a)^2}}\right]\\
&=&\frac{\sigma_i}{2(\bar{p}-a)}\left[1+O\left(\frac{\sigma_i}{(\bar{p}-a)^2}\right)\right],
\end{eqnarray}
where $O(\cdot)$ is a function such that $\lim_{x\rightarrow0}O(x)=0$. In a similar way, (\ref{FPeqnP}) can be rewritten as (skipping some steps):
%\begin{flushleft}
%$\bar{p}-a=$
%\end{flushleft}
\begin{align}
\nonumber \bar{p}-a =&\frac{\bar{p}-a}{N-1}\left(\sum_{i=1}^N\sqrt{1-\frac{\sigma_i}{(\bar{p}-a)^2}}+a\right)\\
%\nonumber=&\frac{\bar{p}-a}{N-1}\left\{\sum_{i=1}^N\left[1-\frac{\sigma_i}{2(\bar{p}-a)^2}\left(1+O\left(\frac{\sigma_i}{(\bar{p}-a)^2}\right)\right)\right]+a\right\}\\
=&\frac{\bar{p}-a}{N-1}\left[\frac{N\bar{\sigma}}{2(\bar{p}-a)^2}\left(1+O\left(\frac{\sigma_{\max}}{2(\bar{p}-a)^2}\right)\right)+a\right].
\end{align}
Hence, we obtain for large $N$
\begin{eqnarray}\label{pi3}
\bar{p}-a
%&=&\sqrt{\frac{\bar{\sigma}}{2}\left[1+O\left(\frac{\sigma_{\max}}{(\bar{p}-a)^2}\right)\right]}\\
&=&\sqrt{\frac{\bar{\sigma}}{2}}\left[1+O\left(\frac{\sigma_{\max}}{2(\bar{p}-a)^2}\right)\right]
\end{eqnarray}
Note that $\bar{p}-a>0$ due to the stability condition. Let $\bar{\sigma}=\sum_{i=1}^N\sigma_i$, as before.
Let a solution of (\ref{pi3}) be $\bar{p}=\sqrt{\bar{\sigma}/2}+a$, i.e.,
\begin{eqnarray}\label{pi4}
\bar{p}-a=\sqrt{\frac{\bar{\sigma}}{2}}\left[1+O\left(\frac{\sigma_{\max}}{\bar{\sigma}}\right)\right].
\end{eqnarray}
%Equation 
(\ref{pi4}) is consistent provided that $\sigma_{\max}\ll \sigma$ and $a\ll N$. Since, by Theorem \ref{Uniq}, the solution is unique under (C-i),  $\bar{p}$ can indeed be approximated by $\bar{p}\sim a+\sqrt{\bar{\sigma}/2}$, which leads to $p_i\sim\frac{\sigma_i}{\sqrt{2\bar{\sigma}}}$ from (\ref{TE}). Hence,   (ii)-(v) follow.
\end{proof}

\section{Open-Loop LQ Differential Games}
\label{sec:4}
In this section, we go back to the DGs described by (\ref{cost}) and (\ref{system}), but with open-loop information. Each player knows only the value of the initial state of the system. Since the cost runs from zero to infinity, we  are interested in controls that yield finite costs. Accordingly, we restrict the controls of the players to belong to the set
$$\mathcal{U}^{OL}(x_0)=\{u\in \mathcal{L}_2 [0,\infty) \mid J_i(x_0,u) < \infty,\; \forall i\in  \mathcal{N}\}\,,$$
%\begin{eqnarray}
%\nonumber \mathcal{U}^{OL}(x_0)&=&\{u\in L_2\mid J_i(x_0,u) \textrm{~exists~in~} \mathbb{R}\cup\{-\infty, \infty\}, \\
%\nonumber &&\lim_{t\rightarrow\infty} x(t)=0\},\end{eqnarray}
where $\mathcal{L}_2 [0,\infty)$ is the space of square-integrable functions on $[0, \infty)$.%, i.e.,
%$$L_2=\left\{u| \forall T>0, \int_{0}^Tu^T(s)u(s)ds<\infty\right\}.$$

\begin{thm}\label{OLthm}
[Open-Loop NE, \cite{BasOls99}, \cite{Eng05}]  Consider the $N-$person LQ DG in (\ref{cost}) and (\ref{system}), and assume that there exists a unique solution $\xi^\star$ to the set of
%a coupled algebraic Riccati 
equations
\begin{equation}\label{OLRE}
0=2a\xi_i+q_i-\xi_i\left(\sum_{j=1}^Ns_j\xi_j\right),
\end{equation}
such that $a-\sum_{j=1}^Ns_j\xi_j^\star<0$, where $s_i := b_i^2 / r_i$. Then, the game admits a unique open-loop Nash equilibrium for every initial state, given by
\begin{equation}\label{OLNE}
u_i^\star(t)=-\frac{b_i}{r_i}\xi_i^\star \exp \left[\left(a-\sum_{j=1}^Ns_j\xi_j^\star\right)t\right]x_0\,.
\end{equation}
The optimal cost to player $i$ using $u_i^\star$ is $\, J_i^\star=k_i^\star x_0,\;$ where $k_i^\star$ is the unique solution to
\begin{equation}\label{OLLE}
2\left(a-\sum_{j=1}^Ns_j\xi_j^\star\right)k_i+q_i+s_i(\xi_i^\star)^2=0.
\end{equation}
\end{thm}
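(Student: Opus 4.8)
The plan is to characterize the open-loop Nash equilibrium through the necessary conditions of Pontryagin's minimum principle applied to each player's individual control problem OC$(i)$, and then to exploit the linear-quadratic structure to collapse those conditions into the algebraic system (\ref{OLRE}). First I would freeze the other players' open-loop controls $u_{-i}(\cdot)$ as known functions of time and write Player~$i$'s Hamiltonian
\[
H_i = q_i x^2 + r_i u_i^2 + p_i\Big(ax + \sum_{j=1}^N b_j u_j\Big),
\]
from which the stationarity condition $\partial H_i/\partial u_i = 0$ gives $u_i = -\tfrac{b_i}{2r_i}p_i$, the costate equation gives $\dot p_i = -2q_i x - a p_i$, and, since the horizon is infinite and the cost must be finite, the admissibility restriction to $\mathcal{U}^{OL}(x_0)$ forces the transversality/decay requirement $p_i(t)\to 0$ as $t\to\infty$.

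Second, substituting the stationary controls into (\ref{system}) yields the closed linear Hamiltonian system in $(x,p_1,\dots,p_N)$, namely $\dot x = ax - \tfrac12\sum_j s_j p_j$ together with the $\dot p_i$ equations, where $s_i=b_i^2/r_i$. I would then posit the linear ansatz $p_i(t)=2\xi_i x(t)$. This immediately recovers the control form (\ref{OLNE}) with $u_i=-\tfrac{b_i}{r_i}\xi_i x$, reduces the state equation to $\dot x = \beta x$ with $\beta := a-\sum_j s_j\xi_j$ (so $x(t)=x_0 e^{\beta t}$), and, upon matching $\dot p_i = 2\xi_i\dot x$ against $-2q_i x - a p_i$, collapses the costate equation exactly into the coupled algebraic system (\ref{OLRE}). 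The stability hypothesis $\beta<0$ then guarantees $x(t),u_i^\star(t)\to 0$ exponentially, so the candidate lies in $\mathcal{U}^{OL}(x_0)$ and meets the decay condition.

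Third, for sufficiency I would observe that once $u_{-i}^\star$ is frozen, OC$(i)$ is a convex problem (quadratic cost, affine dynamics with the exogenous forcing $\sum_{j\neq i} b_j u_j^\star(t)$), so the minimum-principle conditions are not merely necessary but sufficient for the global minimizer; hence the constructed profile is a genuine Nash equilibrium. For the cost I would integrate the running cost along the equilibrium arc,
\[
J_i^\star = x_0^2\int_0^\infty \big(q_i + s_i(\xi_i^\star)^2\big)e^{2\beta t}\,dt = \frac{q_i + s_i(\xi_i^\star)^2}{-2\beta}\,x_0^2,
\]
and identify the coefficient with $k_i^\star$ via the linear equation (\ref{OLLE}), i.e.\ $2\beta k_i^\star + q_i + s_i(\xi_i^\star)^2 = 0$.

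The step I expect to be the main obstacle is \emph{uniqueness}. The construction only shows that a stabilizing solution of (\ref{OLRE}) produces an equilibrium; to conclude the equilibrium is unique for every $x_0$ I would argue that any finite-cost open-loop equilibrium must satisfy the same necessary conditions and decay, so its costate trajectory is forced into the stable subspace of the $(N+1)\times(N+1)$ Hamiltonian coefficient matrix. Pinning the initial costate $(p_1(0),\dots,p_N(0))$ to that stable subspace under the single constraint $x(0)=x_0$ is precisely what a stabilizing algebraic solution $\xi^\star$ encodes; since such $\xi^\star$ is assumed unique, the decaying solution, and hence the equilibrium, is unique. Making the dimension count of the stable manifold precise, so that $x_0$ determines the costate uniquely, is the delicate part and is where the hypothesis of a unique stabilizing $\xi^\star$ does the real work.
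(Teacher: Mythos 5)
The paper never proves this theorem: it is imported verbatim from \cite{BasOls99} and \cite{Eng05}, so your attempt has to be judged against the standard argument in those sources. Your existence and cost steps are correct and are essentially that standard argument: freezing $u_{-i}^\star$, the ansatz $p_i=2\xi_i x$ collapses the coupled Pontryagin systems to (\ref{OLRE}); strict convexity of each frozen problem OC$(i)$ upgrades the stationarity/costate conditions (with decaying costate) to global optimality; and integrating along $x(t)=x_0e^{\beta t}$, $\beta:=a-\sum_j s_j\xi_j^\star<0$, gives $J_i^\star=\frac{q_i+s_i(\xi_i^\star)^2}{-2\beta}\,x_0^2=k_i^\star x_0^2$ via (\ref{OLLE}). (Your $x_0^2$ is right; the theorem's ``$k_i^\star x_0$'' is a typo, as the paper's own later formula $J_\mu^\star=\sum_i\mu_i k_i^\star (x_0)^2$ confirms.)

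The gap is exactly where you flagged it, but it is not merely ``delicate'' --- the uniqueness step fails as you have set it up, because uniqueness of the stabilizing algebraic solution $\xi^\star$ does \emph{not} imply that the stable subspace of the $(N+1)\times(N+1)$ Hamiltonian matrix is one-dimensional, and it is the latter, not the former, that would pin the costates. Concretely, the coefficient matrix of the system $\dot x=ax-\sum_j s_j\tilde p_j$, $\dot{\tilde p}_i=-q_ix-a\tilde p_i$ (costates scaled by $1/2$) has characteristic polynomial proportional to $(\lambda+a)^{N-1}\left(\lambda^2-a^2-\bar\sigma\right)$, where $\bar\sigma=\sum_i s_iq_i$; so besides the eigenvalue $-\sqrt{a^2+\bar\sigma}$ it has the eigenvalue $-a$ with multiplicity $N-1$, whose eigenvectors are $(0,w_1,\ldots,w_N)$ with $\sum_j s_jw_j=0$. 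When $a>0$ these are stable directions, and the perturbed trajectories
\begin{equation*}
\bigl(x(t),\tilde p(t)\bigr)=x_0e^{-\sqrt{a^2+\bar\sigma}\,t}\,(1,\xi^\star)+e^{-at}\,(0,w),\qquad \textstyle\sum_j s_jw_j=0,
\end{equation*}
still satisfy every player's stationarity and costate equations with decaying costates: the control perturbations $\delta u_j=-\frac{b_j}{r_j}w_je^{-at}$ cancel in the dynamics ($\sum_j b_j\delta u_j=0$), so the state trajectory is unchanged while each player's control changes. By your own convexity-sufficiency argument, every member of this $(N-1)$-parameter family is a genuine open-loop NE with finite costs. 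Hence for $a>0$, $N\geq 2$, uniqueness of the stabilizing $\xi^\star$ coexists with a continuum of equilibria, and no dimension count of the kind you propose can succeed; what is needed is the additional spectral condition that the Hamiltonian matrix have exactly $n=1$ stable eigenvalues (in this scalar game, $a\leq 0$), which is precisely the hypothesis appearing in the full characterization in \cite{Eng05} and elided in the statement as reproduced here. Separately, your necessity-direction claim that finite cost ``forces'' $p_i(t)\to0$ is not a valid infinite-horizon transversality condition in general (it can be repaired in the LQ case via strict convexity of the best-response problem and its explicit Riccati-plus-feedforward solution), but that repair does not rescue the dimension count above.
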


The quantities in Theorem~\ref{OLthm} can  be made more explicit as we discuss below. By a slight abuse of notation, let $p_i:=s_i\xi_i$ as in the state-feedback information case.  
%It should be clear in the context since $p_i$ is an intermediate variable to reach the solution.
Multiplying (\ref{OLRE}) and (\ref{OLLE}) by $s_i$, we obtain
$\;0=2ap_i+\sigma_i-p_i\bar{p}\,,\;$
and $0=2s_ik_i(a-\bar{p})+\sigma_i+p_i^2,$
where $\bar{p}=\sum_{i=1}^Np_i$. Hence we can solve for $p_i, k_i$, and obtain
\begin{equation}\label{OLpi}
p_i={\sigma_i}\, / \, ({\bar{p}-2a})\end{equation}
\begin{equation}\label{OLkappa}
k_i={\sigma_i+p_i^2}\, /\, ({2s_i(\bar{p}-a)}).
\end{equation}
To obtain $\bar{p}$, we sum  (\ref{OLpi}) over $i$ and arrive at the quadratic equation
$\bar{p}=
%\sum_{j=1}^N{p}_j=
\frac{\bar{\sigma}}{\bar{p}-2a}.$
Thus,
\begin{equation}\label{barp}
\bar{p}=\sqrt{a^2+\bar{\sigma}}+a\,,
\end{equation}
where we have retained only the positive solution of the quadratic equation for obvious reasons.  It should be pointed out that since the relevant $\bar{p}$ is unique, we have a unique  open-loop NE.
Using (\ref{barp}), we can determine the expression for $\xi_i^\star$ (and thus the OL NE strategies of the players \ref{OLNE}),  as
\begin{equation}\label{xistar}
\xi_i^\star=\frac{q_i}{\sqrt{a^2+\bar{\sigma}}-a}.
\end{equation}
Note that these are necessarily stabilizing, that is $a-\sum_{j=1}^Ns_j\xi_j^\star<0$, in view of (\ref{OLLE}).
Now using (\ref{barp}) and (\ref{OLpi}) in (\ref{OLkappa}), we arrive at the  closed-form expression for $k_i^\star$:
\begin{equation}
k_i^\star=\frac{1}{\sqrt{a^2+\bar{\sigma}}}\left(\frac{q_i}{2}+\frac{\sigma_iq_i}{2(\sqrt{a^2+\bar{\sigma}}-a)^2}\right).
\end{equation}
When $a=0$, $k_i^\star$ is reduced to \begin{equation}\label{kappaSoln}
k^\star_i=\frac{1}{\sqrt{\bar{\sigma}}}\left(\frac{q_i}{2}+\frac{\sigma_iq_i}{2\bar{\sigma}}\right).
\end{equation}
Given  weighting $\mu$, the open-loop NE yields a total cost of $$J_\mu^\star=\sum_{i=1}^N\mu_iJ_i^\star=\sum_{i=1}^N\mu_i k_i^\star (x_0)^2 =: k_\mu^\star (x_0)^2\,.$$
Since the open-loop NE solution is unique, the PoA under open loop IS can thus be easily found to be: 
\begin{equation}\label{OLPoA}
\rho_{\mu}^{OL}={k_\mu^\star}\,/\,{\hat{k}_\mu}\,.
\end{equation}
We now capture all this in the corollary below.

\begin{cor}\label{OLNE unique}
The OL LQ DG of Theorem~\ref{OLthm} admits a unique OL NE given by (\ref{OLNE}) and (\ref{xistar}), which is also stabilizing. Furthermore, the OL PoA is given by (\ref{OLPoA}).
\end{cor}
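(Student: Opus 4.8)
The plan is to assemble the corollary from the explicit computations already carried out in the paragraphs preceding the statement, the only genuinely delicate point being the confirmation that the existence-and-uniqueness hypothesis of Theorem~\ref{OLthm} is actually met and that the sign requirements single out exactly one admissible candidate.

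First I would establish uniqueness of the aggregate $\bar p=\sum_{i=1}^N p_i$. Summing (\ref{OLpi}) over $i$ produces the scalar quadratic $\bar p(\bar p-2a)=\bar\sigma$, that is $\bar p^2-2a\bar p-\bar\sigma=0$, whose roots are $\bar p=a\pm\sqrt{a^2+\bar\sigma}$. I would then argue that only the root (\ref{barp}), namely $\bar p=a+\sqrt{a^2+\bar\sigma}$, is admissible: the stability requirement $a-\bar p<0$ forces $\bar p>a$, which discards the minus-sign root since it satisfies $\bar p=a-\sqrt{a^2+\bar\sigma}<a$. Moreover, because $\bar\sigma=\sum_i\sigma_i>0$ we have $\sqrt{a^2+\bar\sigma}>|a|\ge a$, so $\bar p-2a=\sqrt{a^2+\bar\sigma}-a>0$; this guarantees each $p_i=\sigma_i/(\bar p-2a)$ is strictly positive, consistent with $\sigma_i>0$. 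Hence $\bar p$, and with it the entire profile $\{p_i\}$, is uniquely pinned down.

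Next I would propagate this uniqueness to the primitive quantities. Since $p_i=s_i\xi_i$ with $s_i\ne0$, the unique $\bar p$ yields via (\ref{OLpi}) the unique $\xi_i^\star$ of (\ref{xistar}), and then (\ref{OLkappa}) delivers the unique $k_i^\star$; substituting $\xi_i^\star$ into (\ref{OLNE}) gives the unique OL NE strategy profile. The stabilizing claim is then immediate, since $a-\sum_{j}s_j\xi_j^\star=a-\bar p=-\sqrt{a^2+\bar\sigma}<0$, so the closed-loop decay exponent in (\ref{OLNE}) is negative and the controls indeed lie in $\mathcal{U}^{OL}(x_0)$.

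Finally, for the PoA I would observe that uniqueness collapses the maximization in the PoA formula (\ref{GeneralPoA}): the set of OL Nash equilibria is a singleton, so the worst-case weighted cost equals the single value $J_\mu^\star=k_\mu^\star(x_0)^2$. Dividing by the social optimum $J_\mu^\circ=\hat k_\mu(x_0)^2$ from Theorem~\ref{OC}, the factor $(x_0)^2$ cancels and yields $\rho_\mu^{OL}=k_\mu^\star/\hat k_\mu$, which is precisely (\ref{OLPoA}). The hard part is really the first step, namely deciding which root of the quadratic is legitimate and checking that the two sign constraints $\bar p>a$ and $\bar p-2a>0$ hold simultaneously, since everything downstream is routine algebraic substitution into formulas already derived.
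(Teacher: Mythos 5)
Your proposal is correct and follows essentially the same route as the paper, which proves this corollary by the derivation immediately preceding it: reduce (\ref{OLRE}) to the scalar quadratic for $\bar p$ via (\ref{OLpi}), retain the root (\ref{barp}), and read off uniqueness, the stabilizing property, and the ratio (\ref{OLPoA}). The only difference is that you spell out what the paper leaves terse --- the sign argument ($\bar p>a$ from stability and $\bar p-2a>0$, so $p_i>0$) that discards the minus root, and the direct computation $a-\bar p=-\sqrt{a^2+\bar\sigma}<0$ in place of the paper's appeal to (\ref{OLLE}) --- which is a welcome tightening, not a different method.
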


% The remark below is straightforward
%\begin{rem}
%Theorems \ref{FBNE}, \ref{OC} and \ref{OLthm} characterize the solutions of feedback Nash equilibrium, team optimal control and open-loop Nash equilibrium and their corresponding costs respectively. It should be pointed out that when we have only one single player, i.e. $N=1$, the three theorems will yield the same control and same cost. To see this, we let $i$ be the single player and $\mu_i=1$. Using $\bar{p}=p_i$ in (\ref{pi2}), we obtain $J^*=k_i^*x_0^2=\frac{a+\sqrt{a^2+q_is_i}}{s_i}$. In the open-loop case, we use $\bar{p}=p_i$ in (\ref{barp}) and hence we have $J^\star=k_i^\star x_0^2=\frac{\sqrt{a^2+\sigma_i}}{s_i}$. By (\ref{OCeqn}), we have $J^\circ=\hat{k}x_0^2$ for the team problem. Since $\sigma_i=q_is_i$, $\bar{b}=s_i$, we conclude $J^*=J^\star=J^\circ$. 
%\end{rem}

\section{Price of Information (PoI)}
\label{sec:5}
In the previous sections, we have introduced PoA as a measure of efficiency in going from cooperative to noncooperative framework, and obtained expressions for it for FB and OL LQ DGs . Here, we study the price of information (PoI) as a measure of efficiency with respect to the ISs for again the LQ DG. Following  Definition \ref{GeneralPoI},  PoI between open-loop and feedback ISs is defined by
\begin{equation}\label{PoI}
\chi^{OL}_{FB}={\max_{k^\star}J^{OL\star}}\, / \, {\max_{k^*}J^{FB*}}\,,
\end{equation}
%It measures the ratio of extra cost for an open-loop information structure in comparison to a closed-loop information structure. 
which can also be expressed  in terms of the {\em PoA}s under the two ISs:
$$
\chi_{FB}^{OL}={\rho_{\mu}^{OL}}\, / \,{\rho_{\mu}^{FB}}\,.
$$ Using Theorem~\ref{PoAthm}, we can obtain
a bound on PoI: $$\chi^{OL}_{FB}\geq \frac{k^\star}{\mu_{\max}^s(\varrho(\mathbf{M})+a)}\,.$$ The following theorem further characterizes the PoI in a special case.

\begin{thm}\label{PoIthm}
Suppose $a=0$, and the number of players is large so that $N$ satisfies (C-i), (C-ii), and (C-iii). Then, the PoI is bounded from  above  and  below by two constants:
\begin{equation}\label{PoIbound}
{\sqrt{2}}/{2}\leq\chi_{FB}^{OL}\leq\sqrt{2}.
\end{equation}
\end{thm}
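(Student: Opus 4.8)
The plan is to reduce the PoI to a single cost ratio and then sandwich it between two constants using the explicit formulas already derived. The text notes that $\chi_{FB}^{OL}=\rho_\mu^{OL}/\rho_\mu^{FB}$, and since the socially optimal cost $\hat{k}_\mu$ of problem (COC)/(FOC) is independent of the information structure (so it appears in both $\rho_\mu^{OL}=k_\mu^\star/\hat{k}_\mu$ and $\rho_\mu^{FB}=\bar{k}/\hat{k}_\mu$), it cancels. Hence the PoI collapses to the ratio of the two Nash weighted total costs,
$$\chi_{FB}^{OL}=\frac{k_\mu^\star}{\bar{k}}\,,\qquad k_\mu^\star=\sum_{i=1}^N\mu_i k_i^\star\,,\;\;\bar{k}=\sum_{i=1}^N\mu_i k_i\,,$$
and the entire task becomes bounding this ratio above and below.

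First I would pin down the numerator \emph{exactly}. Setting $a=0$ in the closed-form open-loop cost (\ref{kappaSoln}) gives the identity $k_i^\star=\frac{q_i}{2\sqrt{\bar{\sigma}}}\bigl(1+\frac{\sigma_i}{\bar{\sigma}}\bigr)$, which requires no large-population approximation. Because each $\sigma_i=s_iq_i>0$ and $\bar{\sigma}=\sum_j\sigma_j$, we have $0<\sigma_i\le\bar{\sigma}$, so the bracket lies in $[1,2]$. Multiplying by $\mu_i$ and summing (recalling $\bar{q}=\sum_i\mu_i q_i$) then yields the two-sided bound
$$\frac{\bar{q}}{2\sqrt{\bar{\sigma}}}\le k_\mu^\star\le\frac{\bar{q}}{\sqrt{\bar{\sigma}}}\,.$$

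Next I would supply the feedback denominator from the large-population asymptotics of Theorem~\ref{ApproxThm}: under (C-i)--(C-iii) with $a=0$, the feedback weighted total cost obeys $\bar{k}\sim\frac{\bar{q}}{\sqrt{2\bar{\sigma}}}$ (equivalently $\rho_\mu^{FB}\sim\sqrt{\bar{q}\bar{b}/(2\bar{\sigma})}$). Dividing the exact numerator bounds by this asymptotic denominator, the common factor $\bar{q}/\sqrt{\bar{\sigma}}$ cancels and the remaining constants are $\sqrt{2\bar{\sigma}}/(2\sqrt{\bar{\sigma}})=\sqrt{2}/2$ and $\sqrt{2\bar{\sigma}}/\sqrt{\bar{\sigma}}=\sqrt{2}$, producing $\frac{\sqrt{2}}{2}\le\chi_{FB}^{OL}\le\sqrt{2}$ exactly as stated.

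The step requiring the most care is reconciling the two factors, since the numerator bound is exact while the denominator is only an asymptotic equivalence valid under (C-i)--(C-iii). I would make explicit that the $O(\sigma_{\max}/\bar{\sigma})$ error term carried by $\bar{k}$ in Theorem~\ref{ApproxThm} is absorbed into the inequalities, so that the final constants are clean. It is worth observing that the lower bound $\sqrt{2}/2$ is in fact the \emph{sharp} asymptotic value of the PoI, attained as each $\sigma_i/\bar{\sigma}\to0$ (precisely the regime enforced by (C-iii)), whereas the upper bound $\sqrt{2}$ is a valid but non-tight envelope coming from the extremal configuration $\sigma_i/\bar{\sigma}=1$ in which a single player's $\sigma_i$ dominates. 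The only genuine work beyond this is verifying that the correction terms in numerator and denominator are controlled consistently, and this follows directly from $\sigma_i/\bar{\sigma}\in(0,1]$ together with the approximation already established in Theorem~\ref{ApproxThm}.
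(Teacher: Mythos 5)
Your proposal is correct and follows essentially the same route as the paper's own proof: reduce $\chi_{FB}^{OL}$ to the ratio $J^{OL\star}/J^{FB*}$, evaluate the open-loop side exactly from (\ref{kappaSoln}) with $a=0$, replace the feedback side by the large-population approximation $\bar{q}/\sqrt{2\bar{\sigma}}$ from Theorem~\ref{ApproxThm}, and squeeze the ratio using $0<\sigma_i\le\bar{\sigma}$. The paper just organizes the same algebra slightly differently (writing $\chi_{FB}^{OL}=\tfrac{\sqrt{2}}{2}\bigl(1+\sum_i\mu_i q_i\sigma_i/(\bar{q}\bar{\sigma})\bigr)$ and bounding the bracket between $1$ and $2$, rather than bounding $k_\mu^\star$ first), and your closing remark that (C-iii) actually forces the ratio near the sharp value $\sqrt{2}/2$, making $\sqrt{2}$ a loose envelope, is a valid observation consistent with Corollary~\ref{cor:PoI}.
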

\smallskip

\begin{proof}
Under conditions  (C-i), (C-ii), and (C-iii), we have a unique feedback NE that can be approximated as in statement (iv) of Theorem~\ref{ApproxThm}. Hence, from (\ref{barp}) we obtain
\begin{eqnarray} \nonumber
\nonumber \chi_{FB}^{OL}&=&\frac{J^{OL\star}}{J^{FB*}}=\frac{\sqrt{2}}{2}\left(1+\frac{\sum_{i=1}^N\mu_iq_i\sigma_i}{\bar{q}\bar{\sigma}}\right)\\
\nonumber &=&\frac{\sqrt{2}}{2}\left(1+\frac{\sum_{i=1}^N\mu_iq_i\sigma_i}{\sum_{i=1}^N\mu_iq_i\sum_{i=1}^N\sigma_i}\right) \leq \sqrt{2}\,,
%&\leq&\sqrt{2},
\end{eqnarray}
where the last inequality is obtained by noting that $$\sum_{i=1}^N\mu_iq_i\sigma_i\geq\sum_{i=1}^N\mu_iq_i\sum_{i=1}^N\sigma_i\,.$$ The lower bound can be achieved by noting that $\sigma_i, q_i, \mu_i$ are all nonnegative.
\end{proof}
\bigskip
%The upper bound in (\ref{PoIbound}) is achievable if
%the following condition holds for parameters $\sigma_i, q_i, \mu_i$:
%\begin{equation}\label{PoIboundCondEq}
%\sum_{i=1}^N\sum_{j\in\mathcal{N}\backslash\{i\}}\mu_iq_i\sigma_j=0.
%\end{equation}
% Here N is not large
%For instance, in a two-person game, we can let $\mu_1=1, \mu_2=0$ and $b_2=0, b_1\neq0$ then $\mu_1q_1\sigma_2+\mu_2q_2\sigma_1=0$ and (\ref{PoIboundCondEq}) is satisfied.

Theorem~\ref{PoIthm} is useful  in the design of games via access control or pricing mechanisms. Let $\bar{\chi}\in(\frac{\sqrt{2}}{2},\sqrt{2}]$ be some target PoI to achieve so that $\chi_{FB}^{OL}\leq\bar{\chi}$. For example, when $\bar{\chi}=1$, it means the game needs to be designed so that the open-loop NE yields no larger cost than the feedback NE. Hence, a necessary condition to meet such a design criterion is:
% to let 
\begin{equation}\label{PoIDesignCond} \frac{\sum_{i\in\mathcal{N}} \mu_iq_i\sigma_i}{\bar{q}\bar{\sigma}}\leq \sqrt{2}\chi_{FB}^{OL}-1.\end{equation}
An access control is to admit a set $\mathcal{N}$ of players so that (\ref{PoIDesignCond}) is satisfied when all the system and player parameters are given. When set $\mathcal{N}$ is fixed and not adjustable, we may use ``pricing" mechanisms to control the parameters $r_i$ or $q_i$, which reflect the unit ``price" of penalty on the control effort and the state, respectively. In the following corollary, we capture the special case of homogeneous players.

%The open-loop equilibrium is better than the feedback equilibrium when the set of users $\mathcal{N}$ satisfies $$\sum_{i\in\mathcal{N}}\frac{1}{q_i}\leq \frac{\sqrt{2}-1}{2}.$$ 
%A immediate corollary from Theorem \ref{PoIthm} can be stated as follows.
\begin{cor}\label{cor:PoI}
Suppose the LQ DG satisfies the conditions in Theorem \ref{PoIthm}. In addition, let the players  be symmetric so that $\sigma_i=\sigma, p_i=p,\forall i\in\mathcal{N}$. When 
$N\geq 3$,
%$N\geq\lceil\frac{1}{\sqrt{2}-1}\rceil$, 
the open-loop IS yields better total optimal cost; otherwise the FB information does better. In addition, as $N\rightarrow \infty$, $\lim_{N\rightarrow\infty}\chi_{FB}^{OL}=\frac{\sqrt{2}}{2}$ at the rate of $O\left(\frac{1}{N}\right)$.
\end{cor}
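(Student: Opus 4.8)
The plan is to specialize the closed-form expression for the price of information obtained in the proof of Theorem~\ref{PoIthm}. Under the inherited hypotheses $a=0$ together with (C-i)--(C-iii), that proof yields
$$\chi_{FB}^{OL}=\frac{\sqrt{2}}{2}\left(1+\frac{\sum_{i=1}^N\mu_iq_i\sigma_i}{\bar{q}\bar{\sigma}}\right),$$
so the whole corollary reduces to evaluating the cross term under symmetry. First I would observe that the symmetry condition $\sigma_i=\sigma$ alone already collapses this term: pulling the constant $\sigma$ out of the numerator gives $\sum_{i=1}^N\mu_iq_i\sigma_i=\sigma\sum_{i=1}^N\mu_iq_i=\sigma\bar{q}$, while $\bar{\sigma}=\sum_{i=1}^N\sigma_i=N\sigma$, so the ratio equals exactly $1/N$ regardless of the weights $\mu_i$ or the individual $q_i$'s. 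Hence
$$\chi_{FB}^{OL}=\frac{\sqrt{2}}{2}\left(1+\frac{1}{N}\right).$$

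With this explicit formula in hand, the comparison of the two information structures is immediate. By Definition~\ref{GeneralPoI}, the open-loop IS yields the lower total cost exactly when $\chi_{FB}^{OL}<1$. Solving $\frac{\sqrt{2}}{2}(1+1/N)=1$ gives $1/N=\sqrt{2}-1$, i.e.\ the crossover at $N=\sqrt{2}+1\approx 2.414$. Since $\chi_{FB}^{OL}$ is strictly decreasing in $N$, it lies below $1$ precisely for integer $N\geq 3$, where open loop is cheaper, and above $1$ for $N\leq 2$, where feedback is cheaper; this is the claimed dichotomy.

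For the asymptotic statement, letting $N\to\infty$ in the displayed formula gives $\chi_{FB}^{OL}\to\frac{\sqrt{2}}{2}$, and since $\chi_{FB}^{OL}-\frac{\sqrt{2}}{2}=\frac{\sqrt{2}}{2}\cdot\frac{1}{N}$, the convergence is of order $O(1/N)$ with an explicit constant. I do not expect a genuine obstacle here, as the corollary is a direct specialization; the only point worth flagging is that the reduction of the cross term to $1/N$ uses symmetry of the $\sigma_i$ only, so the threshold $\sqrt{2}+1$ and the limiting value $\sqrt{2}/2$ are universal and insensitive to any residual heterogeneity in the $q_i$'s and the weights $\mu_i$.
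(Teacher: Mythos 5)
Your proposal is correct and follows essentially the same route as the paper's proof: specialize the expression $\chi_{FB}^{OL}=\frac{\sqrt{2}}{2}\bigl(1+\frac{\sum_{i}\mu_iq_i\sigma_i}{\bar{q}\bar{\sigma}}\bigr)$ from Theorem~\ref{PoIthm} to get $\chi_{FB}^{OL}=\frac{\sqrt{2}}{2}\left(1+\frac{1}{N}\right)$, solve $\chi_{FB}^{OL}\leq 1$ to find the integer threshold $N\geq 1/(\sqrt{2}-1)\approx 2.414$, and read off the $O(1/N)$ convergence to $\frac{\sqrt{2}}{2}$. Your additional observation that only the symmetry $\sigma_i=\sigma$ (and not of the $q_i$'s or weights $\mu_i$) is needed for the cross term to collapse to $1/N$ is a nice sharpening of the paper's remark that the result is independent of player parameters, but it does not change the argument.
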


\begin{proof}
The proof directly follows from Theorem~\ref{PoIthm}. The price of information under the additional assumptions becomes $\chi_{FB}^{OL}=\frac{1}{\sqrt{2}}\left(1+\frac{1}{N}\right)$. It is independent of the parameters of the players and approaches $\frac{\sqrt{2}}{2}$ as $N\rightarrow \infty$. By letting $\chi_{FB}^{OL} \leq 1$, we obtain
%\begin{eqnarray}
$\; N\geq\, {1}\,/\,({\sqrt{2}-1})\,.$
%\end{eqnarray}
Hence, since $N$ is an integer, the open-loop NE does better than the feedback NE when there are $3$ or more players.
%$N\geq\lceil\frac{1}{\sqrt{2}-1}\rceil$. 
\end{proof}
\bigskip

Theorem~\ref{PoIthm} and Corollary~\ref{cor:PoI} have implications in the design of games via access control when open loop is the preferred mode of play.

\section{Applications and Illustrations}
\label{sec:6}
In this section, we apply the results obtained heretofore to two classes of application scenarios in flow control.
% (\ref{cost}) and (\ref{system}).
\subsection{Multiuser Rate-Based Flow Control}
\label{subsec:6.1}
We adopt here the communication systems model described in \cite{AB98}, where the players are the {\em users} or {\em sources}, and the action (control) variables are the {\em flows} into the network. If a link receives more total flow than what it can accommodate (measured by its capacity), then packets queue up. Having long queues is not desirable, because it leads to delays in transmission. We call such links which are congested {\em bottleneck links}, and formulate the game around one such link. Let $q_l(t)$ denote the queue length at such a bottleneck link and let $s(t)$ denote the total effective service rate available at that link. Assume that each user is assigned a fixed proportion of the available bandwidth; more specifically, the traffic of source $i, i=1, 2,\ldots, N$, has an allotted bandwidth of $w_is(t)$, where $w_i$'s are positive parameters which add up to $1$. We assume that the users have perfect measurement of $s(t)$, but occasionally exceed or fall short of the bandwidth allotted to them due to fluctuations. Hence, if  $d_i(t)$ denotes the rate of source $i$ at time $t$, we can introduce  $u_i(t):=d_i(t)-w_is_r(t)$ as the control (action) variable of the source. Then, queue build-up is governed by the differential equation
\begin{equation}\label{QueueModel}
\dot{q_l}(t)=\sum_{i=1}^Nu_i(t)\,,
\end{equation}
where we assume that queue is relatively tightly controlled so that end effect constraints (starvation and exceeding an upper limit) do not become active.
The goal is  to ensure that the bottleneck queue size  stays around some desired level $\bar{q}_l$, and  good tracking between input and output rates is achieved. Toward that end, we consider the shifted variable $x(t):= q_l(t)-\bar{q}_l$, which satisfies the following differential equation which is the  shifted version of (\ref{QueueModel}):
\begin{equation}\label{MQM}
\dot{x}(t)=\sum_{i=1}^Nu_i\, \;\;x(0)=x_0\,.
\end{equation}
We now consider a noncooperative scenario in which each source determines a linear feedback policy (or an open-loop policy) to minimize its own individual cost function 
\begin{equation}\label{GP}
L_i(u)=\int_0^\infty\left(|x(t)|^2+|u_i(t)|^2\right) dt,
\end{equation}
which is consistent with the overall goal of keeping $x$ and $u_i$'s small.
We can also consider a related team problem in which sources minimize cooperatively a common cost under the same information structure (where as we know actually the IS does not make a difference in this case):
\begin{equation}\label{TP}
L(u)=\int_0^\infty\left(N|x(t)|^2+\sum_{i=1}^N|u_i(t)|^2\right) dt.
\end{equation}

This is now within the framework of LQ DGs studied earlier, with the correspondences being $a=0, x_0=1, \sigma_i= s_i= q_i=r_i =b_i=1$ in  (\ref{cost}) and (\ref{system}).  To obtain some numerical results, let us take $x_0=1$.

In the case of the 2-person LQ feedback game, the $M$ matrix introduced earlier becomes
$$\mathbf{M}_2=
\left[
\begin{array}{cccc}
0 & 1& 1& 0\\
    1& 0& 0& -1\\
    1 &0& 0& -1\\
    0 &1/3 &1/3 &0
\end{array}
\right]
$$
and  if $N=3$, we have
$$\mathbf{M}_3=
\left[
\begin{array}{cccccccc}
0 &1 &1& 1&  0&  0 & 0&  0  \\
    1& 0 &0& 0& -1 &-1 & 0&  0\\
    1 &0 &0& 0& -1&  0&  -1& 0\\
    1 &0& 0& 0&  0 &-1&  -1& 0\\
    0& 1/3 &1/3 &0& 0& 0& 0 &-1/3\\
    0& 1/3& 0 &1/3 &0& 0& 0& -1/3\\
    0& 0 & 1/3& 1/3 &0& 0& 0 &-1/3\\
    0 &0 &0 &0& 1/5& 1/5 &1/5& 0\end{array}
\right].
$$
The positive eigenvalue of $\mathbf{M}_2$ is $\lambda_2=1.1547$ and the corresponding vector is 
$\mathbf{p}_2=\mathbf{k}_2=[1.0000,    0.5774,    0.5774, 0.3333]^T$. The sum of the optimal costs under equal weights is $J^*_2=0.5774$ while the optimal common cost is $J^\circ_2=0.5$, yielding the price of anarchy value $\rho^{FB}_{\mu,2}=1.1547$. For the case with 3 players, the eigenvector is found to be $\mathbf{p}_3=\mathbf{k}_3=[1.0000,    0.4472,    0.4472,    0.4472,    0.2000,    0.2000,    0.2000,    0.0894]^T
$ corresponding to $\lambda_3=1.3416$.  Again under equal weights, the total NE  cost is $J^*_3=0.4472$ and the minimum social cost is $J^\circ_3=0.3333$. Hence, the price of anarchy is given by $\rho^{{FB}}_{\mu,3}=1.3416$. When the number of players becomes large, $\rho^{FB}_{\mu}\sim\sqrt{\frac{N}{2}}$ from Theorem \ref{ApproxThm}.
%Next, we illustrate in Figure \ref{IPIter3person} the iterative method to find a desired equilibrium whose eigenvalue has an upper bound in (\ref{IneqLambdaDesign}) around $1.5$ with efficiency at level $\rho^{FB}_{\mu}=1.5$ and under equal weights. The convergence to an eigenvalue $\lambda=0.4472$ is within $4$ steps.

%\begin{figure}
%\begin{center}
  % Requires \usepackage{graphicx}
%  \includegraphics[scale=0.5]{}
%  \caption{Power Iteration}
%  \label{IPIter3person}
%\end{center}
%\end{figure}

In the case of open-loop flow control, we obtain $k^\star_i=\frac{1}{\sqrt{N}}\left(\frac{1}{2}+\frac{1}{2N}\right)$ and total NE cost as $J^\star_N=k^\star$. In the 2-user game, $J^\star_2=0.5303$ yielding the price of information $\chi_{FB}^{OL}=0.9184$. The open-loop NE thus yields $8.16\%$ less cost in comparison to the closed-loop FB one. In a 3-user game, $J^\star_3=0.3849$, leading to a price of information value of $\chi_{FB}^{OL}=0.8607$, which yields a $13.93\%$ more cost for the FB IS  case. We also note that as the number of players increases, the open-loop IS yields a cost approaching $0$, i.e., $\lim_{N\rightarrow\infty}J_N^\star=0$, while in the feedback case, even though it still converges to $0$, the rate is slower: $J^*\sim\frac{1}{\sqrt{2N}}\rightarrow0$. We observe that $\chi_{FB}^{OL}$ goes to $\frac{\sqrt{2}}{2}$ at a rate of $\frac{1}{N}$ as $N$ gets large, i.e., 
$$\lim_{N\rightarrow\infty}\chi_{FB}^{OL}=\frac{\sqrt{2}}{2}+\frac{1}{2N}\rightarrow \frac{\sqrt{2}}{2}.$$ It is also noted that open-loop NEs always yield less equilibrium costs even though they require less information.

Due to the symmetry of players in the flow control problem, we can obtain exact closed-form solutions to the equilibrium costs using (\ref{pi2}) and (\ref{FPeqnP}) without approximation. It is not hard to show that under equal weights, $$J^*_{\mbox{OL}}=k_i=\frac{1}{\sqrt{2N-1}}\,, \;\; J^\star_{\mbox{FB}}=\frac{1}{\sqrt{N}}\left(\frac{1}{2}+\frac{1}{2N}\right)\, \;\;\mbox{and}\;\; J^\circ=\frac{1}{N}\,.$$ In Figure \ref{PoIOLvsFB}, we show the price of information under open-loop and feedback information structures, and in Figure \ref{PoAOLvsFB}, we show the corresponding prices of anarchy. By exact calculation, we find when $N=4$, the open-loop NE cost to be $J^\star_4=\frac{3}{8}=0.3125$, which catches up with and becomes better than the feedback NE cost: $J^*_4=\frac{1}{\sqrt{7}}=0.378$.This is consistent with  our earlier observation based on large population approximation.

We observe in Figure~\ref{PoIOLvsFB} that the NE costs are the same at $N=1$ (as they should be), and as $N$ increases, both open-loop and feedback NE costs decrease. As $N$ becomes large, both costs  approach $0$. This happens because the queue length is fixed.  When the number of players goes to infinity, the contribution from each user is negligible. Moreover, the state $x(t)$ can be driven to zero very fast as the amount of total control effort increases with the number of players. The cost incurred from the transient behavior of $x(t)$ then goes to zero. In addition, for $N\geq 2$, open-loop NE yields better costs.  The price of information $\chi_{FB}^{OL}$ is always below $1$ but maintains its level above $\frac{\sqrt{2}}{2}$. In Figure~\ref{PoAOLvsFB}, the price of anarchy starts at $1$ when $N=1$ and increases as the number of players grows. The cost under the feedback NE grows faster than the one under open-loop NE.
\begin{figure}
\begin{center}
  % Requires \usepackage{graphicx}
  \includegraphics[scale=0.5]{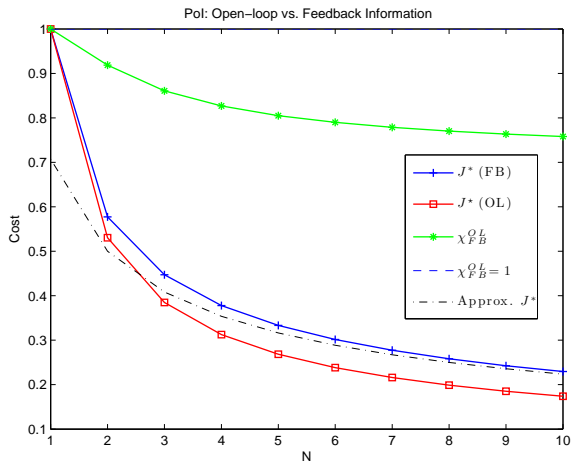}
  \caption{Price of Information}
  \label{PoIOLvsFB}
\end{center}
\end{figure}

\begin{figure}
\begin{center}
  % Requires \usepackage{graphicx}
  \includegraphics[scale=0.5]{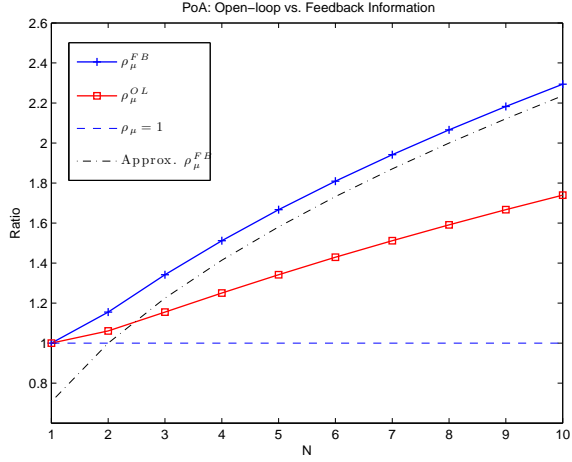}
  \caption{Price of Anarchy}
  \label{PoAOLvsFB}
\end{center}
\end{figure}

\subsection{Normalized Flow Control Dynamics}
\label{subsec:6.2}
In this section, we investigate a general flow control dynamics, which differs from (\ref{MQM}) by inclusion of a population-dependent normalization factor $f(N)$, where $f(\cdot)$ is an increasing function of $N$:
\begin{equation}\label{MQM2}
\dot{x}(t)=\frac{1}{f(N)}\sum_{i=1}^Nu_i\, , \;\;\;x(0)=1\,.
\end{equation}
The introduction of a normalization factor is to adjust the queue length proportionally when the number of users increases.
\begin{prop}
The prices of anarchy $\rho_{\mu}^{OL}, \rho_{\mu}^{FB}$, and  the price of information $\chi_{FB}^{OL}$ are independent of the normalization factor $f(N)$, as  summarized in Table \ref{FactorSummary}.
\begin{table*}[ht]
\caption{Various indices for normalized flow control game}
\begin{center}
\begin{tabular}{|c|c|c|c|c|c|}
\toprule
  %\hline
  % after \\: \hline or \cline{col1-col2} \cline{col3-col4} ...
 $J^*$ (FB) & $J^\circ$ (TP) & $J^\star$ (OL)   & $\rho_{\mu}^{FB}$   & $\rho_{\mu}^{OL}$  & $\chi_{FB}^{OL}$  \\
\midrule
%  \hline
  %\hline
$\frac{f(N)}{\sqrt{2N-1}}$&$\frac{f(N)}{N}$&$\frac{f(N)}{\sqrt{N}}\left(\frac{1}{2}+\frac{1}{2N}\right)$&$\frac{N}{\sqrt{2N-1}}$&$\sqrt{N}\left(\frac{N+1}{2N}\right)$&$\sqrt{2-\frac{1}{N}}\left(\frac{1}{2}+\frac{1}{N}\right)$\\
%  \hline
\bottomrule
\end{tabular}
\end{center}
\label{FactorSummary}
\end{table*}%

\end{prop}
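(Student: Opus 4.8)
The plan is to recognize the normalized game (\ref{MQM2}) with costs (\ref{GP}) and (\ref{TP}) as the special case of (\ref{cost})--(\ref{system}) in which $a=0$, $q_i=r_i=1$, $x_0=1$, and the input gain is shrunk to $b_i=1/f(N)$ for every $i$. The only place where $f(N)$ enters the analysis is therefore through the aggregated coefficients $s_i=b_i^2/r_i=1/f(N)^2$ and $\sigma_i=s_iq_i=1/f(N)^2$, which moreover keep the players symmetric. The strategy is to compute the three equilibrium costs (team optimum $J^\circ$, feedback NE $J^*$, open-loop NE $J^\star$) in closed form from Theorem~\ref{OC}, Theorem~\ref{Uniq}, and Corollary~\ref{OLNE unique} respectively, to show that each scales as exactly one power of $f(N)$, and then to cancel that common factor in the ratios defining $\rho_\mu^{FB}=J^*/J^\circ$, $\rho_\mu^{OL}=J^\star/J^\circ$, and $\chi_{FB}^{OL}=J^\star/J^*$.

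The crux is a one-line homogeneity observation: replacing $b_i$ by $b_i/f(N)$ rescales each of $s_i,\sigma_i,\bar\sigma,\bar b$ by the factor $1/f(N)^2$ while leaving the dimensionless ratios $\sigma_i/\bar\sigma$ invariant. Feeding this into the closed forms shows each cost is positively homogeneous of degree one in $f(N)$. For the team problem, Theorem~\ref{OC} with $a=0$ gives $\hat k_\mu=\sqrt{\bar q/\bar b}$, and since $\bar b\mapsto\bar b/f(N)^2$ we get $\hat k_\mu\mapsto f(N)\hat k_\mu$. For the feedback NE, scenario (s-ii) of Theorem~\ref{Uniq} with symmetric $\sigma_i=\sigma$ gives $p=\sqrt{\sigma/(2N-1)}$ and $k_i=p/s$, in which the factor $f(N)^{-1}$ coming from $\sqrt\sigma$ combines with the $f(N)^{2}$ coming from $s^{-1}$ to leave a single factor $f(N)$. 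For the open-loop NE, (\ref{kappaSoln}) gives $k_i^\star=\bar\sigma^{-1/2}(q_i/2+\sigma_iq_i/(2\bar\sigma))$, where the parenthesized term is $f(N)$-invariant and the prefactor $\bar\sigma^{-1/2}$ supplies exactly one factor of $f(N)$.

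Once this degree-one homogeneity is in hand, all three indices are manifestly independent of $f(N)$, which is the assertion of the proposition. To populate Table~\ref{FactorSummary} I would then do the symmetric bookkeeping under equal weights $\mu_i=1/N$: here $\bar q=1$, $\bar b=N^2/f(N)^2$, and $\bar\sigma=N/f(N)^2$, so that $J^\circ=f(N)/N$, $J^*=k_ix_0^2=f(N)/\sqrt{2N-1}$, and $J^\star=(f(N)/\sqrt N)(1/2+1/(2N))$. Forming the ratios yields $\rho_\mu^{FB}=N/\sqrt{2N-1}$, $\rho_\mu^{OL}=\sqrt N\,(N+1)/(2N)$, and $\chi_{FB}^{OL}=\sqrt{2-1/N}\,(1/2+1/(2N))$, in agreement with the table and, in the large-$N$ limit, with Corollary~\ref{cor:PoI}.

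I do not expect a genuine obstacle: the proposition is at heart a scaling-and-verification statement. The only point that needs care is the bookkeeping ensuring each cost carries exactly one power of $f(N)$, since $s^{-1}$ contributes $f(N)^2$ whereas the square-root structure of the Riccati/algebraic solutions contributes $f(N)^{-1}$, and one must confirm these combine identically in the feedback and open-loop cases rather than conspiring differently. Proving the degree-one homogeneity once, as above, removes this risk and makes the cancellation of $f(N)$ in all three indices transparent, with the explicit symmetric formulas serving only to fill in the table entries.
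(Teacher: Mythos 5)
Your proposal is correct and follows essentially the same route as the paper's own proof: compute the three equilibrium costs in closed form under symmetry and equal weights (via Theorem~\ref{OC}, the solution (\ref{pi2})--(\ref{FPeqnP}) of Theorem~\ref{Uniq}, and (\ref{kappaSoln})), observe that each carries exactly one factor of $f(N)$, and cancel it in the ratios; your up-front degree-one homogeneity remark is just a cleaner packaging of the same cancellation the paper exhibits by direct substitution. One point in your favor: your entry $\chi_{FB}^{OL}=\sqrt{2-\frac{1}{N}}\left(\frac{1}{2}+\frac{1}{2N}\right)$ is the correct value, whereas the factor $\left(\frac{1}{2}+\frac{1}{N}\right)$ printed in Table~\ref{FactorSummary} is a typo, as one checks against the $N=2$ value $\chi_{FB}^{OL}=0.9184$ in Section~\ref{subsec:6.1} and the large-$N$ form $\frac{\sqrt{2}}{2}+\frac{\sqrt{2}}{2N}$ in Table~\ref{summary}.
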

\begin{proof}
Using (\ref{pi2}) and (\ref{FPeqnP}), we obtain $p_i$ for a given $N$ as follows:
\begin{eqnarray}
\nonumber \bar{p}&=&\frac{N}{f(N)}\frac{1}{\sqrt{2N-1}},\\
\nonumber p_i&=&\frac{1}{f(N)\sqrt{2N-1}},\\
\nonumber k_i&=&\frac{p_i}{s_i}=\frac{f(N)}{\sqrt{2N-1}},\\
\nonumber J^*&=&\sum_{i=1}^N\frac{1}{N}k_ix_0^2=k_i.
\end{eqnarray}
The team problem yields an optimal cost of
\begin{equation}
J^\circ=\sqrt{\frac{\bar{q}}{\bar{b}}}=\frac{f(N)}{N}.
\end{equation}
Hence, the price of anarchy $\rho_{\mu}^{FB}$ under the state-feedback information structure is independent of $f(N)$, and  is given by
\begin{equation}
\rho_{\mu}^{FB}=\frac{N}{\sqrt{2N-1}}
\end{equation}
The open-loop price of anarchy is also  independent of the factor $f(N)$. 
Since $J^\star=\frac{f(N)}{\sqrt{N}}\left(\frac{1}{2}+\frac{1}{N}\right)$, it is given by
\begin{equation}
\rho_{\mu}^{OL}=\sqrt{N}\left(\frac{N+1}{2N}\right).
\end{equation}
The price of information is also independent of $f(N)$, and given by
\begin{equation}
\chi_{FB}^{OL}=\sqrt{2-\frac{1}{N}}\left(\frac{1}{2}+\frac{1}{N}\right).
\end{equation}
\end{proof}
\bigskip

As a case study, we let $f(N)=\frac{1}{N}$. Then, $b_i=\frac{1}{N}$, $s_i=\sigma_i=\frac{1}{N^2}$, for all $i\in\mathcal{N}$. When the population is large, we have $J^*\sim\sqrt{\frac{N}{2}}$ and $J^\star=\sqrt{N}\left(\frac{1}{2}+\frac{1}{N}\right)$.
The price of anarchy remains $\rho\sim\sqrt{\frac{N}{2}}$. The price of information remains $\chi_{FB}^{OL}=\frac{\sqrt{2}}{2}+\frac{\sqrt{2}}{2N}\rightarrow\frac{\sqrt{2}}{2}$ as $N\rightarrow \infty$. It can be shown that $\chi_{FB}^{OL}$ does not change with the factor $f(N)$. In Figures \ref{NormalizedPoIOLvsFB} and \ref{NormalizedPoAOLvsFB}, we show the prices based on the exact closed form solution obtained in the same fashion as in the previous section based on  (\ref{pi2}) and (\ref{FPeqnP}). We observe that the open-loop NE always outperforms the feedback equilibrium. It should be pointed out that (i) in Figure \ref{NormalizedPoIOLvsFB}, the open-loop and feedback costs increase with the number of users. This is due to the introduction of normalization factor into the system dynamics. We allocate the queue length as an increasing function of the number of users;  (ii) Figures \ref{NormalizedPoAOLvsFB} and \ref{PoAOLvsFB} are identical due to the above proposition.

If we set $f(N)=\sqrt{N}$, we have the open-loop and feedback optimal costs approach $\frac{1}{2}$ and $\frac{\sqrt{2}}{2}$ respectively, as $N\rightarrow \infty$. Figure~\ref{SQRTNNormalizedPoIOLvsFB} demonstrates that result.

\begin{figure}
\begin{center}
  % Requires \usepackage{graphicx}
  \includegraphics[scale=0.5]{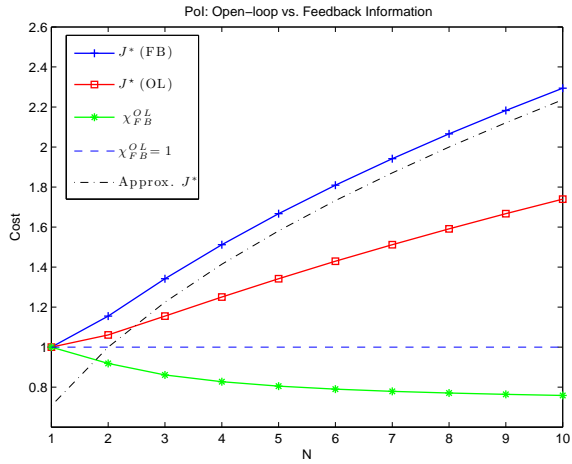}
  \caption{Price of Information  in the Normalized System, $f(N)={N}$}
  \label{NormalizedPoIOLvsFB}
\end{center}
\end{figure}

\begin{figure}
\begin{center}
  % Requires \usepackage{graphicx}
  \includegraphics[scale=0.5]{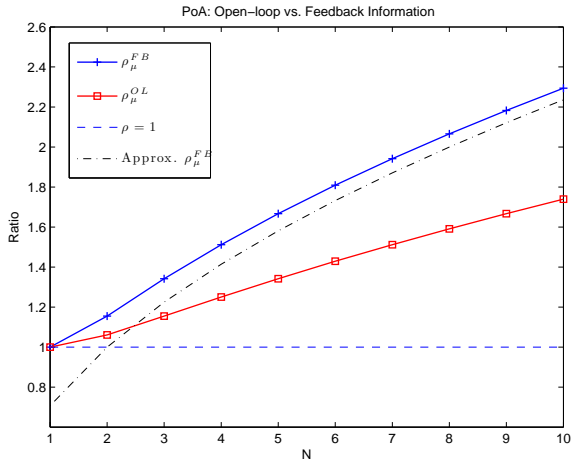}
  \caption{Price of Anarchy in the Normalized System, $f(N)={N}$}
  \label{NormalizedPoAOLvsFB}
\end{center}
\end{figure}

\begin{figure}
\begin{center}
  % Requires \usepackage{graphicx}
  \includegraphics[scale=0.5]{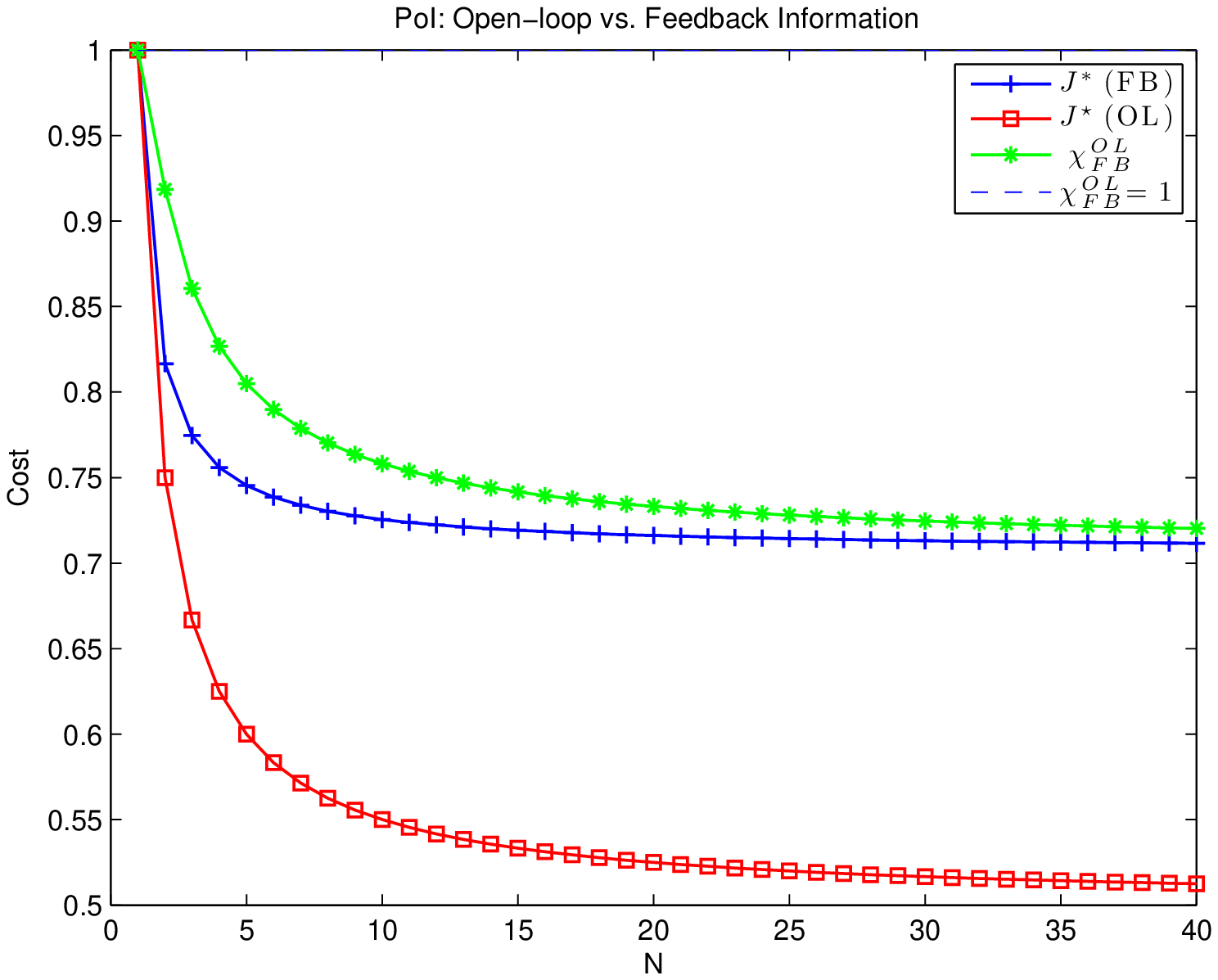}
  \caption{Price of Information  in the Normalized System, $f(N)=\sqrt{N}$}
  \label{SQRTNNormalizedPoIOLvsFB}
\end{center}
\end{figure}

A summary of the results with $f(N)=1$ and $f(N)=\frac{1}{N}$ under large population approximation is provided in Table~\ref{summary}.
\begin{table*}[t]
\caption{Indices under two  normalization factors using the large population approximation}
\begin{center}
\begin{tabular}{|c|c|c|c|c|c|c|}
\toprule
  %\hline
  % after \\: \hline or \cline{col1-col2} \cline{col3-col4} ...
  $f(N)$ & $J^*$ (FB) & $J^\circ$ (TP) & $J^\star$ (OL)   & $\rho_{\mu}^{FB}$ & $\rho_{\mu}^{OL}$    & $\chi_{FB}^{OL}$  \\
\midrule
%  \hline
  %\hline
  $1$ &$\frac{1}{\sqrt{2N}}$&$\frac{1}{N}$&$\frac{1}{\sqrt{N}}\left(\frac{1}{2}+\frac{1}{2N}\right)$&$\sqrt{\frac{N}{2}}$& $\sqrt{N}\left(\frac{1}{2}+\frac{1}{2N}\right)$&$\frac{\sqrt{2}}{2}+\frac{\sqrt{2}}{2N}$\\
  \hline
  $\frac{1}{N}$ &$\sqrt{\frac{N}{2}}$&$1$&$\sqrt{N}\left(\frac{1}{2}+\frac{1}{2N}\right)$&$\sqrt{\frac{N}{2}}$&$\sqrt{N}\left(\frac{1}{2}+\frac{1}{2N}\right)$ &$\frac{\sqrt{2}}{2}+\frac{\sqrt{2}}{2N}$\\
%  \hline
\bottomrule
\end{tabular}
\end{center}
\label{summary}
\end{table*}%
\vspace{-1mm}
\section{Conclusion}

In this paper, we have introduced the notions of  price of anarchy,  price of information, and price of cooperation for nonzero-sum differential games,  have studied the first two extensively  for a  class of scalar linear-quadratic differential games, and  have obtained bounds and approximations on them, with computable bounds available in  the large population regime. Future promising work is to extend these results to non-scalar differential games as well as to obtain their counterparts for the price of cooperation.  Also computing these indices for specific models from communication networks and economics would be a fruitful area of research. 

%\begin{acknowledgements}
%If you'd like to thank anyone, place your comments here
%and remove the percent signs.
%\end{acknowledgements}

% BibTeX users please use one of
%\bibliographystyle{spbasic}      % basic style, author-year citations
%\bibliographystyle{spmpsci}      % mathematics and physical sciences
%\bibliographystyle{spphys}       % APS-like style for physics
%\bibliography{}   % name your BibTeX data base

% Non-BibTeX users please use

%\bibitem{RefJ}
% Format for Journal Reference
%Author, Article title, Journal, Volume, page numbers (year)
% Format for books
%\bibitem{RefB}
%Author, Book title, page numbers. Publisher, place (year)
% etc
%\end{thebibliography}

\end{document}